\title{\texorpdfstring{Quadratic word equations with regular constraints\\ and the exponent of periodicity}{Quadratic word equations with regular constraints and the exponent of periodicity}}
\titlerunning{Quadratic word equations and the exponent of periodicity}
\author{Volker Diekert}{University of Stuttgart, Germany}{diekert@fmi.uni-stuttgart.de}{https://orcid.org/0000-0002-5994-3762}{}
\author{Silas Natterer}{University of Stuttgart, Germany}{natterersilas@gmail.com}{https://orcid.org/0009-0004-6524-0854}{}
\author{Alexander Thumm}{University of Siegen, Germany}{alexander.thumm@uni-siegen.de}{https://orcid.org/0009-0005-4240-2045}{}
\authorrunning{V.~Diekert, S.~Natterer, A.~Thumm}
\keywords{Quadratic equations, regular constraints, exponent of periodicity.}
\newtheorem{problem}{Problem}
\newcommand{\Sol}{\mathrm{Sol}}
\newcommand{\stab}{\operatorname{stab}}
\renewcommand{\implies}{\Rightarrow}
\renewcommand{\ast}{*}
\renewcommand{\colon}{:}
\renewcommand{\iff}{\mathrel{\Leftrightarrow}}
\newcommand{\Sop}{S^{\mathrm{op}}}
\newcommand{\pv}[1]{\ensuremath{\mathbf{#1}}\xspace}
\newcommand{\lds}{, \ldots ,}
\newcommand{\Mati}{Matiyasevich\xspace}
\newcommand{\MatiSG}{Matiyasevich solution graph\xspace}
\newcommand{\nondet}{non\-deter\-ministic\xspace}
\newcommand{\IFF}{if and only if\xspace}
\renewcommand{\hom}{homo\-mor\-phism\xspace}
\newcommand{\homs}{homo\-mor\-phisms\xspace}
\newcommand{\Endo}{endo\-mor\-phism\xspace}
\newcommand{\Endos}{endo\-mor\-phisms\xspace}
\newcommand{\morph}{mor\-phism\xspace}
\DeclareMathOperator{\End}{\mathrm{End}}
\newcommand{\cXsing}{\cX_{\mathrm{sing}}}
\newcommand{\tra}{transition\xspace} 
\newcommand{\tras}{transitions\xspace}
\newcommand{\epstra}{$\eps$-transition\xspace}
\newcommand{\eg}{e.g.\xspace}
\newcommand{\Ip}{In parti\-cu\-lar,\xspace}
\newcommand{\ip}{in parti\-cu\-lar,\xspace}
\newcommand{\invol}{involution\xspace}
\newcommand{\solu}{solu\-tion\xspace}
\newcommand{\solus}{solu\-tions\xspace}
\newcommand{\subst}{sub\-sti\-tu\-tion\xspace}
\newcommand{\substs}{sub\-sti\-tu\-tions\xspace}
\newcommand{\prref}[1]{\prettyref{#1}}
\newcommand{\wt}[1]{\widetilde{#1}}
\newcommand{\larc}[1]{\overset{#1\hspace*{2pt}}\longrightarrow}
\newcommand{\set}[2]
{\{#1\mid #2\}}
\newcommand{\os}[1]{\{#1\}}
\newcommand{\sm}{\setminus}
\newcommand{\es}{\emptyset}
\newcommand{\sse}{\subseteq}
\newcommand{\abs}[1]{\lvert#1\rvert}
\newcommand{\N}{\ensuremath{\mathbb{N}}}
\newcommand{\PSPACE}{\ensuremath{\mathsf{PSPACE}}}
\renewcommand{\P}{\ensuremath{\mathsf{PTIME}}}
\renewcommand{\phi}{\varphi}
\newcommand{\eps}{\varepsilon}
\newcommand{\oo}{\omega}
\newcommand{\alp}{\alpha}
\newcommand{\bet}{\beta}
\newcommand{\gam}{\gamma}
\newcommand{\sig}{\sigma}
\newcommand{\Sig}{\Sigma}
\newcommand{\Gam}{\Gamma}
\newcommand\OO{\Omega}
\newcommand{\Oh}{\mathcal{O}}
\newcommand{\cA}{\mathcal{A}}
\newcommand{\cC}{\mathcal{C}}
\newcommand{\cD}{\mathcal{D}}
\newcommand{\cE}{\mathcal{E}}
\newcommand{\cQE}{\mathcal{QE}}
\newcommand{\cH}{\mathcal{H}}
\newcommand{\cJ}{\mathcal{J}}
\newcommand{\cK}{\mathcal{K}}
\newcommand{\cL}{\mathcal{L}}
\newcommand{\cR}{\mathcal{R}}
\newcommand{\cS}{\mathcal{S}}
\newcommand{\cX}{\mathcal{X}}
\newcommand{\oi}[1]{{#1}^{-1}}
\newcommand{\Rat}{\mathrm{Rat}}
\newcommand{\Rec}{\mathrm{Rec}}
\newcommand{\Reg}{\mathrm{Reg}}
\newcommand{\siL}{\sim_\cL}
\newcommand{\siR}{\sim_\cR}
\newcommand{\siJ}{\sim_\cJ}
\newcommand{\leL}{\leq_\cL}
\newcommand{\leR}{\leq_\cR}
\newcommand{\leJ}{\leq_\cJ}
\newcommand{\geR}{\geq_\cR}
\newcommand{\geJ}{\geq_\cJ}
\begin{document}
\maketitle
\begin{abstract}
In this article, we study word equations in free semigroups and the conjecture that the existence of infinitely many solutions entails the existence of solutions with arbitrarily large exponent of periodicity.
We examine this question in the broader framework of word equations with regular constraints and establish new positive results: the conjecture holds for all quadratic word equations with constraints in finite semigroups from the variety $\mathbf{DLG}$ and its left-right dual $\mathbf{DRG}$, encompassing, in particular, all finite groups, commutative semigroups, and $\mathcal{J}$-trivial semigroups.
\end{abstract}
\begin{center}
\emph{This paper is dedicated to Mikhail V.~Volkov and his ``duo'' Mischa for being forever young.
}
\end{center}
\section{Introduction}\label{sec:intro}
Word equations form a rich area of study.
A natural question in this general setting asks whether the existence of infinitely many \solu{s} implies the existence of \solu{s} with arbitrarily large exponent of periodicity. 
In this article, we examine a variant of this question for quadratic word equations with regular constraints, uncovering interesting connections between word equations and structural properties of finite semigroups along the way.

Word equations rose to prominence when Makanin proved that the existential theory of free semigroups is decidable. 
His algorithm drew attention on several counts: it resolved a long-standing open problem, the result was unexpected, and the accompanying termination proof stretched beyond 80~pages.
That proof rested on a theorem of Hmelevski\u{\i}, which guarantees that the shortest solution of a word equation has an exponent of periodicity below some computable bound.
This is the backdrop against which the story of our paper unfolds.

\subsection{A Conjecture}\label{sec:bcon}
Consider a word equation $U = V$ over constants and variables, and let $\sig$ be a \solu, that is, a substitution mapping variables to words over constants in such a way that $\sig(U) = \sig(V)$.

The \emph{exponent of periodicity} of~$\sig$, denoted by $\exp(\sig)$, is defined as the largest power of a nonempty word that appears as a factor in some $\sig(X)$, where $X$ is a variable.
(The formal definition is given in \prref{def:defexpinf}.)
This notion naturally leads to the following question.
\begin{center}
  \emph{Provided that a word equation admits infinitely many \solus,\\ must there exist \solus with arbitrarily large exponent of periodicity?}
\end{center}
The bold conjecture is that the answer to the above question is positive.\footnote{Unfortunately, bold conjectures share the fate of pilots: there are many old pilots and many bold pilots, but not so many who are both old and bold.}
The authors support this conjecture, for which several partial results provide evidence. 

First, the conjecture holds for \emph{quadratic equations}, meaning equations in which each variable occurs at most twice.
This is the content of \prref{prop:trivial}, which is subsumed by our main result, \prref{thm:main}, which also allows for certain types of regular constraints.
Furthermore, the conjecture also holds for the case of \emph{two-variable equations}, as can be derived from the description of their solution sets by Lucien Ilie and Wojciech Plandowski~\cite[Thm.~36]{IliePlandoswskiRAIRO2000}; see \prref{prop:IPR2000} for a restatement in our framework.

Additional support arises from group theory. 
In their work on Tarski’s conjectures on the elementary theory of free groups, Olga Kharlampovich and Alexei Miasnikov~\cite{KMIV06}, as well as Zlil Sela~\cite{sela13}, employ induction arguments based on the JSJ-decomposition of limit~groups.
According to these authors (personal communication), the same techniques show that the conjecture holds for equations in free groups, although this result has not been published yet.

Finally, a weaker form of support comes from computational exploration: so far we have not been able to construct a counterexample, neither manually nor with computer assistance.

\subsection{Historical Background}\label{sec:history}

The \emph{satisfiability problem for word equations} asks whether a given equation $U = V$ admits a \solu. 
Interest in this problem first arose in the Russian school of mathematics, when it was observed that this problem can be reduced to \emph{Hilbert's Tenth Problem}. 
Thus, proving the undecidability of the former would have simultaneously resolved the latter in the negative.

This program ultimately failed greatly with two major milestones in mathematics. 
In 1970, Yuri \Mati demonstrated the undecidability of Hilbert's Tenth Problem using number-theoretic methods building on the earlier work of Martin Davis, Hilary Putnam, and Julia Robinson; see~\cite{mat93} for a detailed exposition. 
The resulting theorem is now referred to as the DPRM theorem, in accordance with \Mati{}’s preference.

The second milestone was achieved in 1977 by Gennadi{\u\i} Makanin~\cite{mak77}, who proved that the satisfiability problem for word equations is, in fact, decidable. 
A key ingredient in his proof is a result by Hmelevski{\u\i}~\cite{hme76}, showing that a shortest \solu\ of a word equation~$U = V$ has an exponent of periodicity which is bounded in the size of the equation. 
Moreover, there is a computable number~$\kappa$, depending only on $|UV|$, such that if a \solu~$\sigma$ satisfies $\exp(\sigma) \ge \kappa$, then there are \solu{s}~$\sigma_n$ with $\exp(\sigma_n) \ge n$ for all $n \in \mathbb{N}$; we then write $\exp(U = V) = \infty$. 

Clearly, $\exp(U = V) = \infty$ implies the existence of infinitely many \solu{s}. 
But what about the converse direction?
This question behaves, to quote a phrase, like `\emph{a complete unknown, like a rolling stone}': it is easy to state and easy to grasp, and~--~with numerous new tools for analyzing word equations
having emerged since Makanin's result in 1977~--~it might now be within reach to answer this question.
However, it still tends to `\emph{roll away}'.\footnote{Or, in Salinger's terms, it resembles \emph{The Catcher in the Rye}.}

What are these new tools? 
First and foremost, \emph{compression techniques} have greatly clarified why the satisfiability problem for word equations is decidable. 
This approach was initiated by Wojciech Plandowski and Wojciech Rytter~\cite{pr98icalp} and simplified by Artur~Je\.{z}~\cite{Jez16jacm_stacs}. 
Moreover, compression techniques extend naturally to \emph{word equations with regular constraints}, as studied by Klaus Schulz in his Habilitationsschrift~\cite{schulz90}. 

Broadening the setting to include regular constraints has made the field more attractive to research that builds upon word equations. 
For instance, many modern works in \emph{string solving} and \emph{formal verification} incorporate word equations as part of their underlying framework. 

Solving equations with regular constraints also opens the possibility of finding an equation that has infinitely many constrained \solu{s}, yet~--~solely because of the constraints~--~the exponent of periodicity remains bounded by some fixed constant. 

For example, the word equation $XabY = YbaX$ (over constants $a,b$ and variables $X,Y$) is a promising candidate. 
Lucien Ilie and Wojciech Plandowski~\cite{IliePlandoswskiRAIRO2000} devoted an entire section to describing the full solution set of this short, two-variable quadratic equation. 
See also the article of Weinbaum~\cite{WeinbaumPacific2004ABCD} for the description of the \solu set of equations $XAY=YBX$ with $A\neq B$; and \prref{prob:4} for more details. 

On the other hand, the hope for positive results was kindled during a discussion with Wojciech Plandowski at a meeting held in Wittenberg, Germany, in September~2018. 
In that discussion, Plandowski referred to the equation $XabY = YbaX$ as the \emph{`one and only'} known equation with infinitely many \solu{s} that conceals its property $\exp(XabY = YbaX) = \infty$. 
His joke alluded to the fact that typical examples of word equations with infinitely many \solu{s} arise from systems of integer linear equations. 
This \emph{`one and only'} equation is, of course, the source of infinitely many similar examples.\footnote{Similarly, the language $\set{a^n b^n}{n \in \N}$ is the \emph{`one and only'} context-free language which is not regular.}

\subsection{Our Contribution}

In this article, we address the question posed above: 
whether a word equation with infinitely many solutions necessarily admits solutions of arbitrarily large exponent of periodicity. 
We study this problem in the setting of \emph{word equations in free semigroups with regular constraints}.

To obtain positive results, we approach the problem by considering restricted classes of word equations and restricted classes of finite semigroups. 
Given a class of word equations~$\cE$, we call a finite semigroup~$S$ \emph{nice} for~$\mathcal{E}$ if every equation in~$\cE$ with constraints in~$S$ has infinitely many solutions if and only if it has solutions with arbitrarily large exponent of periodicity; see~\prref{def:nicecon}.  
We show that, for every fixed class~$\cE$, the corresponding class of all finite semigroups that are nice for~$\cE$ enjoys several natural closure properties. 
Indeed, this class is always closed under taking subsemigroups and homomorphic images. 
Moreover, under mild assumptions on~$\cE$, it is also closed under taking left-right duals and forming retractive nilpotent extensions; see~\prref{lem:niceS}, \prref{lem:niceH}, and \prref{thm:ell} in \prref{sec:NSG}.

As an immediate application of these closure properties, we obtain that all finite nilpotent semigroups are nice for the class of \emph{two-variable word equations}; see \prref{cor:IPR2000}.

Our main result concerns the class of all \emph{quadratic word equations}~--~that is, equations in which each variable occurs at most twice~--~and members of the variety~$\pv{DLG}$ of finite semigroups whose regular 
$\mathcal{D}$-classes are right groups or, equivalently, whose regular $\mathcal{L}$-classes are groups (see \prref{sec:ddlg}).
The following statement, the proof of which occupies \prref{sec:sobeauty}, is formulated in~\prref{thm:mainiac}.

\begin{theorem}\label{thm:main}
  Let $S \in \pv{DLG}$.
  Then $S$ is nice for all quadratic word equations.
\end{theorem}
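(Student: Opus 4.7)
The plan is to combine the closure properties of niceness from \refsec{sec:NSG} with the structure theory of \pv{DLG} in order to reduce the claim to its essentially group-theoretic core, where the regular constraints can be dissolved by a pumping trick. Recall that $S\in\pv{DLG}$ means that every regular $\mathcal{D}$-class of $S$ is a right group, i.e., a direct product $G\times R$ of a finite group with a right-zero semigroup. Since the direction ``arbitrarily large exponent of periodicity implies infinitely many solutions'' is immediate, the real work is the converse.

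The first step is a structural reduction by induction on~$|S|$. Either $S$ is completely simple~--~and therefore, being in \pv{DLG}, itself a right group~--~or $S$ has a proper ideal $I$ with $S/I\in\pv{DLG}$ strictly smaller, and a standard argument realises $S$ as a retractive nilpotent extension of a smaller semigroup in \pv{DLG}. Repeated application of \reflem{lem:niceS}, \reflem{lem:niceH}, and \refthm{thm:ell} reduces the problem to the case $S=G\times R$; a projection argument, exploiting that in a right-zero semigroup every product equals its last letter, further reduces it to the case $S=G$, a finite group. In the group case I would argue in two sub-steps: first, use the structural description of the solution set of a quadratic word equation, in the spirit of \refprop{prop:trivial}, to show that infinitely many constrained solutions force the existence of at least one constrained solution $\sigma$ whose exponent of periodicity exceeds the Hmelevski{\u\i} threshold $\kappa(|UV|)\cdot\ord(G)$; second, starting from such a $\sigma$, locate a periodic factor $p^{k}$ inside some $\sigma(X)$ and pump it to $p^{k+m\cdot\ord(G)}$ for every $m\in\N$. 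The quadratic structure ensures that this pumping is compatible on both sides of $U=V$, and multiplication by $p^{m\cdot\ord(G)}$ being trivial in $G$ guarantees that the regular constraints remain satisfied, yielding the desired unbounded sequence.

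The main obstacle, I expect, is the structural step: showing that every $S\in\pv{DLG}$ can be built as a tower of retractive nilpotent extensions over a right group in a manner that makes \refthm{thm:ell} applicable at every stage. The principal series of a general finite semigroup need not respect the \pv{DLG} condition, and the resulting extensions may a priori fail to admit retractions; this is precisely where the hypothesis $S\in\pv{DLG}$, as opposed to a weaker variety, plays an essential role. A second delicate point is the first sub-step in the group case: unlike the two-variable setting underlying \refprop{prop:IPR2000}, there is no closed-form description of the solution set of a general quadratic equation, so this step must be argued more abstractly, for instance via a finite parameterisation of quadratic systems together with a careful analysis of how the group-theoretic regular constraints cut these parameterisations down.
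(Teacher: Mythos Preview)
Your structural reduction is not sound. The claim that every non--completely-simple $S\in\pv{DLG}$ arises as a retractive nilpotent extension of a strictly smaller semigroup in $\pv{DLG}$ fails already for the two-element semilattice $\{0,1\}$ (with $1\cdot 1=1$): its only proper ideal is $\{0\}$, and $S^{k}=S\not\subseteq\{0\}$ for every $k$, so the extension is never nilpotent in the sense of \refsec{sec:nilpot}. More generally, semilattices and $\cJ$-trivial semigroups lie in $\pv{DLG}$ but contain non-zero idempotents arbitrarily high up, so no tower of nilpotent extensions can peel them down to a right group. You yourself flag this step as the main obstacle; it is in fact a genuine obstruction, not a technicality that the hypothesis $S\in\pv{DLG}$ rescues.

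The group case is also underspecified. Your first sub-step, ``infinitely many constrained \solus force one constrained \solu whose exponent exceeds $\kappa(|UV|)\cdot\ord(G)$'', is essentially the statement you are trying to prove; it does not follow from \refprop{prop:trivial}, which only gives unconstrained \solus of large exponent, none of which need satisfy the group constraint. And in the second sub-step, replacing a factor $p^{k}$ of $\sigma(X)$ by $p^{k+m\cdot\ord(G)}$ does not, in general, produce a \solu of $U=V$: being quadratic does not by itself make such local surgery compatible across both occurrences of $X$.

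The paper proceeds quite differently. It works directly inside the \MatiSG of the given quadratic equation with constraint $\mu\colon\OO^{+}\to S$, and uses the $\pv{DLG}$ hypothesis only through the behaviour of $\cL$-stabilizers (\prref{def:lst}, \prref{def:DLRG}). For each nontrivial strongly connected component $\cC$ one isolates a \emph{leading $\cJ$-class} $J_{\cC}$ and the submonoid $S_{\cC}=\stab_{\cL}(J_{\cC})$; this in turn determines a \emph{playground}---a maximal prefix pair whose letters lie in $S_{\cC}$---whose size and set of \emph{players} are invariants of $\cC$. A combinatorial argument in the spirit of \refprop{prop:trivial}, but now tracking players rather than all variables, shows that every cycle in $\cC$ visits a state which is \emph{nicely balanced} (\prref{def:nice}); for such a state one can pump a prefix $\sig(v)^{m\omega}$ into $\sig(X)$ while preserving the constraint, because $\mu(v)\in\stab_{\cL}(\mu(X))$ gives $\mu(v)^{\omega}\mu(X)=\mu(X)$. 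No reduction on $S$ is performed at all.
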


Our results thus support the conjecture in a nontrivial setting, where additional regular constraints are permitted. 
Moreover, the authors believe that this is a nice result.

\section{Notation and Preliminaries}\label{sec:nota}

By $\N$ we denote the set of natural numbers and by $\N_+$, the set of positive natural numbers.
Frequently, we identify an element with the singleton containing this element. 
The restriction of a mapping $\psi: B\to C$ to some $A \sse B$ is still denoted $\psi$.
The set of mappings from $A$ to $B$ is denoted by $B^A$.
If $\phi: A\to B$ and $\psi: B\to C$ are mappings then the composition $\psi \phi:A\to C$ is defined by $(\psi \phi)(a)= \psi(\phi(a))$. 

\subsection{Semigroups and Monoids}\label{sec:free}
A monoid is a semigroup with a neutral element, which we usually denote by~$1$. 
Every \hom between monoids has to map the neutral element to the neutral element. 
This is not required for semigroups. 
Thus, for monoids $S$ and $T$ there can be more semigroup \homs from $S$ to $T$ than monoid \homs. 

The \emph{empty} semigroup is a semigroup with no elements.
A semigroup with exactly one element is called \emph{trivial}.
For a given semigroup $S$, we let $S^1=S$ provided that $S$ is a monoid.
If $S$ is not a monoid, then $S^1$ is obtained by adjoining a neutral element to $S$. 
Thus, $S^1=S \cup \os 1$.
For every semigroup $S = (S,\cdot)$ there is a left-right dual semigroup $\Sop=(S,\circ)$, using the same underlying set $S$, but with its multiplication reversed, that is, $x \circ y= y \cdot x$. 
Note that, in contrast to groups, the semigroups $S$ and $\Sop$ are not isomorphic, in general. 

\smallskip

A \emph{pseudovariety} is a class of  semigroups which is closed under forming subsemigroups, homomorphic images, and finite direct products.\footnote{Consequently, every variety of finite semigroups contains the trivial semigroup and the empty one.}
Following Pin~\cite{pin86}, we call a pseudovariety of finite semigroups simply a \emph{variety of finite semigroups}. 
This facilitates the notation and there is no risk of confusion because if a variety contains a semigroup with at least two elements, it contains an infinite semigroup.

\smallskip

The \emph{free monoid} over a set $\Gam$ is denoted by $\Gam^*$.
In this context $\Gam$ is called an \emph{alphabet}, its elements are called \emph{letters}, and the elements of $\Gam^*$ are called \emph{words}.
The empty word is the neutral element and denoted by $1$. 
The \emph{free semigroup} over a set $\Gam$ is $\Gam^+ = \Gam^* \sm \os 1$.

If $w \in \Gam^*$ is a word, then $\abs{w}$ denotes its length and $\abs{w}_a$ denotes the number of occurrences of the letter $a \in \Gam$ within $w$ so that, in particular, $\abs{w} = \sum_{a \in \Gam} \abs{w}_a$. 

A word $u \in \Gam^*$ is a \emph{factor} of $w \in \Gam^*$ if there are $p,q \in \Gam^*$ such that $w = puq$.
Moreover, if $p$ or $q$ can be chosen as the empty word, then $u$ is a \emph{prefix} or \emph{suffix} of $w$, respectively.
These notions generalize to arbitrary semigroups in the form of Green's relations.

\subsection{Green's Relations}\label{sec:green}
We assume that most of the readers are familiar with Green's relations. 
For convenience we still recall some basic facts using the same notation as, for example, in \cite[Sec.~7.6]{edam16}.

Let $S$ be a semigroup.
For each $x \in S$ the left ideal $S^1x$, the right ideal $xS^1$, and the two-sided ideal $S^1xS^1$ are subsemigroups of $S$ containing~$x$. 
The concept of ideals leads to natural preorders $\leL$, $\leR$, $\leJ$ on $S$ and their induced equivalence relations $\siL$, $\siR$, $\siJ$: 
\begin{alignat*}{6}
  x &\leL y & \;\iff\;  &\phantom{S^1}S^1x &\,\sse\, & S^1y & \qquad
  x &\siL y & \;\iff\;  &\phantom{S^1}S^1x &\,=\,& S^1y\\
  x &\leR y & \;\iff\;  &\phantom{S^1}xS^1 &\,\sse\, & yS^1 & \qquad
  x &\siR y & \;\iff\;  &\phantom{S^1}xS^1 &\,=\,& yS^1\\
  x &\leJ y & \;\iff\;  &S^1xS^1 &\,\sse\, & S^1yS^1&\qquad
  x &\siJ y & \;\iff\;  &S^1xS^1 &\,=\,& S^1yS^1
\end{alignat*}

These are three out of five relations referred to as \emph{Green's relations}. 
Instead of $\siL$, $\siR$, and $\siJ$ we also write $\cL$, $\cR$, and $\cJ$ for the corresponding relations on~$S$. 
The other two relations are then defined by $\cH = \cL \cap \cR$ and $\cD = \cL \circ \cR =\cR \circ \cL$.
For any $\cK\in \os{\cH, \cL, \cR, \cD,\cJ}$ we denote the $\cK$-class of an element $x \in S$ by 
$\cK(x) = \set{y\in S}{x \sim_\cK y}$.

An important fact about finite semigroups is that they satisfy $\cD = \cJ$; thus, all~$\cD$-classes are $\cJ$-classes and vice versa. 
Moreover, if $S$ is finite, then there is some $\oo \in \N_+$ such that $x^\oo$ is idempotent for all $x\in S$, where an element $e \in S$ is called \emph{idempotent} if $e^2 = e$.

An element $x$ in a semigroup $S$ is called \emph{regular} if there exists $y \in S$ with $xyx = x$, and an $\cH$-,  $\cL$-, $\cR$-, or $\cD$-class is called \emph{regular} if it contains an idempotent element. 
The following is a well-known fact;  for a standard proof (for finite semigroups) see \eg \cite[Prop.~7.49]{edam16}.
\begin{proposition}\label{prop:Prop749edam}
Let $L$, $R$, and $D$ be $\cL$-, $\cR$-, and $\cD$-classes of a semigroup~$S$, respectively.
Further, suppose that $L, R \sse D$.
Then the following assertions are equivalent. 
\begin{bracketenumerate}
  \item The $\cD$-class $D$ is regular.
  \item The $\cL$-class $L$ is regular.
  \item The $\cR$-class $R$ is regular.
  \item Every element in $D$ is regular.
\end{bracketenumerate}
\end{proposition}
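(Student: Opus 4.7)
The plan is to first dispose of the easy implications and then reduce everything to one key lemma. The implications (4) $\Rightarrow$ (1), (2), (3) are immediate, because each of $D$, $L$, $R$ is nonempty and any regular element has an idempotent in its $\cL$-, $\cR$-, and $\cD$-class. Likewise (2) $\Rightarrow$ (1) and (3) $\Rightarrow$ (1) follow instantly from the hypotheses $L \sse D$ and $R \sse D$. All the content therefore lies in showing (1) $\Rightarrow$ (4).

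The whole implication rests on the following key lemma: \emph{an element $x \in S$ is regular if and only if $\cL(x)$ contains an idempotent, and equivalently, if and only if $\cR(x)$ contains an idempotent.} The ``only if'' direction is a direct check: when $xyx = x$, the elements $yx$ and $xy$ are idempotent and lie in $\cL(x)$ and $\cR(x)$, respectively (e.g.\ $(yx)(yx) = y(xyx) = yx$, and $x = x(yx) \in S^1 \cdot yx$ gives $yx \siL x$). The ``if'' direction is the small but delicate computation: given an idempotent $e \in \cL(x)$, write $e = ax$ and $x = be$ with $a,b \in S^1$; then $xe = be\cdot e = be = x$, and hence $xax = x(ax) = xe = x$, exhibiting $x$ as regular. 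The $\cR$-version is symmetric.

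With the lemma available, (1) $\Rightarrow$ (4) is a one-liner via $\cD = \cL \circ \cR$: pick an idempotent $e \in D$; for any $y \in D$ there exists $z$ with $y \siR z \siL e$. By the lemma applied to $z$ (whose $\cL$-class contains the idempotent $e$), the element $z$ is regular, so $\cR(z) = \cR(y)$ contains an idempotent, and the lemma applied once more yields that $y$ is regular. The only point requiring care~--~and thus the main obstacle~--~is the bookkeeping in the ``if'' direction of the lemma: one must choose the pre- and post-multipliers from $S^1$ so that the idempotent acts on the \emph{correct} side of $x$, turning $e = e^2$ into the crucial identity $xe = x$. Beyond Green's definitions no further machinery is needed; in particular, finiteness of $S$ plays no role in the argument.
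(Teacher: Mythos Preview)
Your proof is correct and is the standard argument. Note, however, that the paper does not actually supply its own proof of this proposition: it merely cites \cite[Prop.~7.49]{edam16} as a reference for a standard proof, so there is nothing to compare against beyond observing that your argument is precisely the textbook one alluded to.
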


\subsection{Rational, Recognizable, and Regular Sets}\label{sec:regrat}
We now turn to rational, recognizable, and regular sets.
All facts discussed about them in this subsection admit simple proofs, which can be found, for example, in Eilenberg's classical textbook~\cite{eil74} or in other textbooks, sometimes stated for monoids, as in \cite[Sec.~7]{edam16}.

Let $S$ be an arbitrary semigroup.
Its family of \emph{rational sets} is the family $\Rat(S)$ of subsets of $S$ that is defined inductively as follows. 
\begin{itemize}
\item If $L \sse S$ is finite, then $L$ belongs to $\Rat(S)$.
\item If $L_1, L_2 \in \Rat(S)$, then $L_1\cup L_2$ and $L_1 \cdot L_2 = \set{u_1 \cdot u_2}{u_1 \in L_1, u_2 \in L_2}$ are in $\Rat(S)$.
\item If $L \in \Rat(S)$, then the subsemigroup $L^+ = L \cup L \cdot L \cup \cdots$
 generated by $L$ is in $\Rat(S)$.
\end{itemize}
There is a convenient alternative definition of $\Rat(S)$ by using \nondet automata.

\smallskip

A \emph{\nondet~$S$-automaton}, or simply an \emph{$S$-automaton} for short, is a directed graph $\cA$ with sets of initial and final vertices, and where arcs are labeled by elements of~$S$. 
\begin{itemize}
\item The vertices of $\cA$ are called \emph{states}, and the set of all states of $\cA$ is typically denoted by~$Q$.
There are distinguished sets $I\sse Q$ of \emph{initial states}, and $F\sse Q$ of \emph{final states}.
\item The directed labeled arcs of $\cA$ are called \emph{\tras}.
  Each \tra{} can be written as a triple $(p,s,q) \in Q \times S \times Q$ and depicted as $p \larc{s} q$. 
  Given any path $p_0 \larc{s_1} p_1 \; \cdots\; \larc{s_n} p_n$ of \tras{} in $\cA$, we define its label to be the element~$s_1 \cdots s_n\in S^1$. 
Note that this label is an element of $S$ unless $n = 0$, that is, unless the path is an \emph{empty path}.
\item The \emph{accepted language} $L(\cA)$ of $\cA$ is defined as follows.
\begin{align*}
L(\cA) = \set{s\in S}{\text{there is a path labeled } s \text{ from an initial to a final state in $\cA$}}
\end{align*}
\end{itemize}

An $S$-automaton $\cA$ with finitely many \tras{} is called a \emph{\nondet finite $S$-automaton}, or  
\emph{$S$-NFA} for short.
The family  $\Rat(S)$ coincides with the family of all languages which are accepted by $S$-NFA.

It is sometimes convenient to only consider those states of an $S$-automaton that directly contribute to its accepted language.
An~$S$-automaton is called \emph{trim} if every state is on some directed path from an initial to a final state.

\smallskip

For a semigroup~$S$ there is also the family $\Rec(S)$ of \emph{recognizable sets} of~$S$.
It consists of those subsets $L \sse S$ that are \emph{recognized} by some \hom $\mu : S \to T$ to a finite semigroup $T$, which means that $L = \oi \mu(\mu(L))$.
\begin{itemize}
\item A semigroup~$S$ is finitely generated \IFF $\Rec(S) \sse \Rat(S)$.
\item If $\phi: S \to S'$ is a \hom, then $L' \in \Rec(S')$ implies $\oi\phi(L')\in \Rec(S)$.
\item The family $\Rec(S)$ is a Boolean algebra with respect to the standard set operations. 
\item If $S$ contains the free product $\N_+ \ast (\N_+ \times \N_+)$, then $\Rat(S)$ is not closed under finite intersection.\footnote{In fact, closure of $\Rat(S)$ under finite intersection should be viewed as a rare exception.}
\Ip this implies $\Rat(S) \neq \Rec(S)$.
\end{itemize}

The descriptions of full \solu sets for word equations (in our and several other papers) use rational sets in the monoid of \Endos $\End(\Gam^+)$.
For $\abs{\Gam} \geq 2$ this monoid is not finitely generated, it is not torsion free, and it contains $\N_+ \ast (\N_+ \times \N_+)$ as a subsemigroup.\footnote{%
A possible realization of $\N_+ \ast (\N_+ \times \N_+)$ uses three \Endos $\alp$, $\bet$ and $\gam$. 
Here, $\gam$ is defined by $a\mapsto ab$ and $b\mapsto ba$, $\alp$ is defined by $a\mapsto a^2$ and $b\mapsto b$, and $\bet$ interchanges the role of $a$ and $b$ in $\alp$.}

Kleene's subset construction \cite{Kle56} shows that for every finitely generated free semigroup $S = \Gam^+$ (resp.\ for
 $S = \Gam^\ast$) it holds that $\Rec(S) = \Rat(S)$. 
In this case, the sets in $\Rec(S) = \Rat(S)$ are called \emph{regular sets}, and we also write $\Reg(S)$ for the family of all such sets.
We use this terminology only when $\Gam$ is finite.
The regular sets of finitely generated free semigroups and monoids are Boolean algebras and related by 
\[
  \Reg(\Gam^+)= \set{L \in \Reg(\Gam^*)}{1 \notin L},\;
  \Reg(\Gam^*) = \Reg(\Gam^+) \cup \set{ L \cup \os 1 \sse \Gam^*}{L \in \Reg(\Gam^+)}.
\]

\section{Word Equations}
In this paper, we deal with systems of word equations with regular constraints over finitely generated free monoids and semigroups.
We begin with the monoid case. 

\subparagraph*{Word Equations over Free Monoids.}
Let $\Sig$ and $\cX$ be two disjoint finite sets.
We say that $\Sig$ is the \emph{alphabet of constants} and $\cX$ is the set of \emph{variables}; by $\OO = \Sig\cup \cX$ we mean their (disjoint) union. 
A \emph{system of word equations (with regular constraints)} over $\OO^* = (\Sig \cup \cX)^*$ is a pair $(\cS,\mu)$ where $\cS = \os{(U_1,V_1) \lds (U_n,V_n)}$ is a finite set of \emph{word equations} $(U_i,V_i)$~--~that is, $U_i, V_i\in \OO^*$~--~and $\mu: \OO^* \to N$ is a \morph to a finite monoid~$N$.
 
A \emph{\solu} of $E=(\cS,\mu)$ is given by a mapping~$\sig: \cX \to \Sig^*$ such that its extension to a \morph~$\sig: \OO^* \to \Sig^*$, leaving the constants invariant, satisfies the following conditions.
\begin{bracketenumerate}
\item We have $\sig(U)=\sig(V)$ for all $(U,V)\in \cS$.
\item We have $\mu\sig(X)=\mu(X)$ for all $X\in \cX$; that is, \solus have to respect the constraints. 
\end{bracketenumerate}
The set of all \solu{s} of $E = (\cS, \mu)$ is denoted by $\Sol(E) = \Sol(\cS, \mu)$. 

\begin{remark}
In the literature, regular constraints are sometimes specified by a family of regular languages $L_X \sse \Sigma^\ast$ ($X \in \cX$), together with the requirement that every solution $\sigma$ satisfy $\sigma(X) \in L_X$ for all $X \in \cX$.
We now explain how this notion fits into our framework.

Since each language $L_X$ is regular, it is recognized by a homomorphism $\mu_X \colon \Sigma^\ast \to N_X$ to a finite monoid $N_X$.
Let $N = \prod_{X \in \cX} N_X$.
Then the homomorphism $\mu \colon \Sigma^\ast \to N$ whose components are the maps $\mu_X$ recognizes each of the languages $L_X$.
To fit our framework, we then extend $\mu$ to some homomorphism $\mu \colon \Omega^\ast \to N$ by choosing an image $\mu(X) \in \mu(L_X) \sse N$ for every $X \in \cX$.
Typically, regular constraints given by a family of regular languages thus correspond in our setting not to a single homomorphism, but rather to finitely many.
Indeed, for each variable $X$, the value $\mu(X)$ may be chosen arbitrarily from the set $\mu(L_X) \sse N$ and, since $\cX$ and $N$ are both finite, only finitely many such choices are possible.

The set of solutions of a system of equations $\cS$ with constraints given by the regular languages $L_X$ is then the union of the sets $\Sol(\cS,\mu)$ over all homomorphisms $\mu$ as above.
As this union is finite, it suffices for our purposes to consider a single homomorphism $\mu$.
\hspace*{\fill}$\diamond$\end{remark}

A \solu~$\sig$ is called \emph{singular} if $\sig(X)=1$ for some variable $X\in \cX$.
Otherwise~$\sig$ is called \emph{nonsingular}.
For each $\sig\in \Sol(E)$ there is a subset $\cXsing$ such that $\sig(X)=1\iff X\in \cXsing$.
By removing all variables in $\cXsing$ from the system~$E$ (that is, replacing all occurrences in each $U_i$ and $V_i$ by the empty word) the solution~$\sig$ can be made nonsingular. 
This leads to the following nondeterministic procedure: 
we guess a subset $\cX'\sse \cX$, and then we remove all variables in $\cX'$ from the equations and the set of variables; simultaneously, we strengthen the requirement that all \solus\ of the resulting system $E_{\cX'}$ must be nonsingular. 
The procedure is sound and complete:
we have $\Sol(E)\neq \es$ (resp.~$\abs{\Sol(E)} = \infty$) \IFF $\Sol(E_{\cX'})\neq \es$ (resp.~$\abs{\Sol(E_{\cX'})} = \infty$) holds for one of these guesses $E_{\cX'}$.
Moreover, $E$ has infinite exponent of periodicity (according to \prref{def:defexpinf} below) \IFF one of these systems does. 
Thus, for describing $\Sol(E)$ it is enough to work with nonsingular 
\solu{s} or, alternatively, with systems of word equations over free semigroups. 

\subparagraph*{Word Equations over Free Semigroups.}
A \emph{system of word equations with regular constraints} over $\OO^+ = (\Sig\cup\cX)^+$ is a pair $(\cS, \mu)$ where $\cS \sse \OO^+ \times \OO^+$ is a finite set of \emph{word equations}, and where $\mu: \OO^+ \to S$ is a \morph to a finite semigroup~$S$. 
A \emph{\solu} of $E=(\cS,\mu)$ is given by a mapping~$\sig: \cX \to \Sig^+$ with the same properties (1) and (2) as in the monoid case.

The semigroup setting has the advantage that we can use constraints in finite semigroups without neutral elements, allowing us to state some of our results in a more natural way.\footnote{In some sense, the theory of finite semigroups is indeed richer than the one of finite monoids.}

\smallskip

In order to reduce equations over monoids to equations over semigroups, we have already implicitly used one of the central tools in the study of word equations, namely the following.
\begin{definition}\label{def:defsubst}
A \emph{\subst} for $\OO=\Sig\cup\cX$ is an \Endo~$\tau$ of $\OO^+$ leaving the constants in $\Sig$ invariant, and therefore
defined by its restriction to $\cX$.
We say that a \subst is \emph{defined by} a mapping $X \mapsto \tau(X)$  for some $X\in \cX$ if $\tau(\alp) = \alp$ for all $\alp \in \OO \sm \os{X}$.
A \emph{basic \subst} is a \subst defined by 
$X \mapsto \tau(X)$ with $\tau(X) \in \OO X\cup \Sig$.
\hspace*{\fill}$\diamond$\end{definition}

Note that (compositions of) basic \substs are powerful enough to compute every \solu~$\sig$, but \textit{cave canem}: in general, basic \subst{s} increase the length of the equation.

\subsection{Reducing Systems of Equations}\label{sec:single}

Frequently it is convenient to encode a system of word equations $\cS = \os{(U_1,V_1)\lds (U_n,V_n)}$ with constraint $\mu \colon \OO^+ \to S$ by a single word equation. 
In order to do so, we first extend the set of constants $\Sig$ by introducing a new constant $\# \notin \Sig$.
Let $\tilde \Sig = \Sig \cup \os \#$ be this new set of constants, and let $\tilde \OO = \tilde \Sig \cup \cX$.
By adjoining a new zero element $0$ to $S$, we obtain a semigroup $\tilde S = S \cup \os 0$ and we extend $\mu$ to a homomorphism $\tilde \mu : \tilde \OO^+ \to \tilde S$ with $\tilde \mu(\#) = 0$.
Having this, we finally replace $(\cS,\mu)$ by $((U_1\# \cdots U_n\#\, ,V_1\# \cdots V_n\#), \tilde\mu)$, which has the same solutions.

Note that the above encoding does not create any new variables, and it makes sure that the total number of occurrences of each variable is the same as before.
Thus, this construction allows us to subsequently focus on the case of a single word equation.

Henceforth, we also write a word equation $(U,V)$ in the more readable form as $U=V$.

\subsection{The Exponent of Periodicity}\label{sec:expo}
The exponent of periodicity is a well-known concept in combinatorics on words.
Formally, the \emph{exponent of periodicity} $\exp(w)$ of a word $w\in\Sig^*$, is the greatest number such that there is a factorization $w=u\, p^{\exp(w)}\, v$ with $p\in \Sig^+$.
Note that $\exp(w)=0$ \IFF $w = 1$. 

\begin{definition}\label{def:defexpinf}
Let $E = (U=V,\mu)$ be a word equation (over a free monoid or semigroup) with a regular constraint~$\mu$. 
The \emph{exponents of periodicity} of a \solu~$\sig$ of $E$, which we denote by $\exp(\sig)$, and of $E$ itself, denoted by $\exp(E) = \exp(U=V, \mu)$, are defined as follows. 
\begin{itemize}
\item We let $\exp(\sig)$ be the maximum $\max\set{\exp(\sig(X))}{X\in \cX}\in \N$.
\item We let $\exp(E)$ be the supremum $\sup\set{\exp(\sig)}{\sig\in \Sol(E)}\in \N\cup \os \infty$. \hspace*{\fill}$\diamond$
\end{itemize}
\end{definition}

The next proposition generalizes a classical result of  Hmelevski\u{\i}~\cite{hme76}. 
  His proof is based on $p$-stable normal forms, which were also used by Makanin~\cite{mak77} and 
Schulz~\cite{schulz90} for solving word equations with regular constraints.
The description of Makanin's algorithm in~\cite{die98lothaire} contains a proof of \prref{prop:hme76} with explicit bounds, even in case~$S$ is part of the input.

\begin{proposition}\label{prop:hme76}
Let $E = (U=V,\mu)$ be a word equation with a regular constraint~$\mu$. 
Then there is a computable $\kappa_E\in \N$ such that we have $\exp(E) > \kappa_E$ \IFF there are $X \in \cX$ and a nonempty word $p\in \Sig^+$ such that for each $n \in \N$ there is a \solu $\sig_n$ of $E$ where $p^n$ appears as a factor in $\sig_n(X)$. 
\Ip if $\exp(E) > \kappa_E$, then $\exp(E) = \infty$ and $\abs{\Sol(E)} = \infty$.
Moreover, if $\mu$ maps to a fixed finite semigroup~$S$, then we can choose $\kappa_E \in 2^{\Oh(|UV|\cdot |\OO|)}$.
\end{proposition}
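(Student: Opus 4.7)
The plan is to adapt Hmelevski\u{\i}'s classical argument, in the variant used in Makanin's algorithm, so that the $p$-stable normal form technique also handles the regular constraint~$\mu$. Assume $\sig$ is a solution with $\exp(\sig)$ very large, so some $\sig(X_0)$ contains a factor $p^m$ with $p \in \Sig^+$ primitive. Consider the word $w = \sig(U) = \sig(V)$ and its factorization into maximal $p$-runs $w = u_0\, p^{a_1}\, u_1 \cdots p^{a_k}\, u_k$, with each $a_j$ chosen maximal and $p$ not a factor of any $u_j$. One first shows that the number of runs with $a_j \ge N$, for a suitable threshold $N$, is bounded by $\Oh(\abs{UV})$, since each long run spans only $\Oh(1)$ variable/constant boundaries inherited from the positions in $U$ and $V$.

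Next, for each variable $X$ write $\sig(X) = v_{X,0}\, p^{b_{X,1}}\, v_{X,1} \cdots p^{b_{X,r_X}}\, v_{X,r_X}$ in $p$-stable normal form, retaining only exponents $\ge N$. Because $\sig(U) = \sig(V)$ and the long $p$-runs on both sides must match up to $p$-shifts, the exponents $b_{X,i}$ satisfy a system of linear Diophantine equations with coefficients coming purely from the combinatorial matching of positions across $U$ and $V$. There are $\Oh(\abs{UV}\cdot\abs{\OO})$ unknowns, and any direction in the homogeneous kernel yields an assignment along which all exponents can be increased simultaneously without breaking the word identity. To also preserve $\mu\sig(X) = \mu(X)$, use that $\mu(p), \mu(p^2), \ldots$ is ultimately periodic in the finite semigroup $S$ with index and period at most $\abs{S}$; choosing kernel vectors whose entries are multiples of this period preserves the constraint, provided every $b_{X,i}$ already lies past the ultimately-periodic threshold. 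Iterating yields a family $\sig_t$ of solutions with $\exp(\sig_t) \to \infty$, which gives both $\exp(E) = \infty$ and $\abs{\Sol(E)} = \infty$.

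The threshold $\kappa$ must be chosen so that $\exp(\sig) > \kappa$ forces at least one $b_{X,i}$ to exceed the size of the largest component of a fundamental system of solutions of the associated Diophantine system. Standard bounds on integer solutions to linear systems with $\Oh(\abs{UV}\cdot\abs{\OO})$ unknowns, combined with a factor accounting for the period of $\mu(p^n)$ in $S$, yield $\kappa \in 2^{\Oh(\abs{UV}\cdot\abs{\OO})}$. The main obstacle, treated with explicit bounds in the reference \cite{die98lothaire}, is the bookkeeping that justifies setting up the Diophantine system: one has to argue that the $p$-stable decomposition of the individual $\sig(X)$ is consistent across all occurrences of $X$ in $U$ and $V$ and that simultaneous pumping at every position of a given long $p$-run is compatible with the interposed constant pieces $u_j$. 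Once this is in place, the coordination with the regular constraint is a mild addition: restrict kernel directions to multiples of the period of $\mu$ on powers of $p$, which changes $\kappa$ only by a factor $\Oh(\abs{S})$ absorbed into the exponent.
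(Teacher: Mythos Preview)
The paper does not give its own proof of this proposition; it only states that the result is classical (Hmelevski\u{\i}, extended by Makanin and Schulz) and refers to \cite{die98lothaire} for a proof with explicit bounds. Your sketch is precisely the $p$-stable normal form argument the paper alludes to, and you cite the same source for the detailed bookkeeping, so the approaches coincide.

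One small imprecision worth tightening: it is not true that \emph{any} direction in the homogeneous kernel lets you increase the exponents; rather, the point is that once some $b_{X,i}$ exceeds the bound on a Hilbert basis (or fundamental system) of the nonnegative solution set, you can subtract a nontrivial minimal solution while keeping all coordinates above the threshold~$N$, and hence also add it arbitrarily often. This is what forces the single variable $X$ and single primitive $p$ to witness $p^n$ for all $n$, as the statement requires. Apart from this phrasing, your outline matches the standard proof.
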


The above result has the following noteworthy application.

\begin{corollary}[Plandowski, Schubert~\cite{PlandowskiS2018tcs}]\label{cor:ps19}
Let $E = (\cS, \mu)$ be any system of word equations with a regular constraint $\mu: \OO^+ \to S$. 
If the finite semigroup $S$ is fixed and not part of the input, then there is a $\P$-reduction of the problem to decide on input of $E$ whether $\exp(E) = \infty$ to the satisfiability problem of word equations, that is, to decide on input of $E$ whether there exists a \solu\ of $E$.
\end{corollary}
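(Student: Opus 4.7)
The plan is to derive the reduction directly from \prref{prop:hme76}. That proposition tells us that $\exp(E) = \infty$ iff $\exp(E) > \kappa$ for some $\kappa = 2^{\Oh(\abs{UV}\cdot\abs{\OO})}$, which in turn is equivalent to the existence of a solution $\sig$ and of a variable $X \in \cX$ such that $\sig(X)$ contains a factor of the form $p^{\kappa+1}$ for some nonempty $p \in \Sig^+$. So the task becomes: encode the existence of such a factor as a word equation with regular constraints of polynomial size.

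The main obstacle is that $\kappa$ is exponential in $\abs{UV}$, so we cannot literally write $p^{\kappa+1}$. This is circumvented by the standard \emph{doubling trick}: set $m = \lceil\log_2(\kappa+1)\rceil$, which is polynomial in $\abs{UV}$. Introduce fresh variables $P_0, P_1, \ldots, P_m, A, B$ and the equations
\[
  P_{i+1} = P_i P_i \quad (0 \le i < m), \qquad X = A\, P_m\, B.
\]
Any solution of the extended system satisfies $\sig(P_m) = \sig(P_0)^{2^m}$, and since $\sig(P_0) \in \Sig^+$ is nonempty and $2^m \ge \kappa+1$, the word $\sig(X)$ indeed contains an exponent-of-periodicity witness. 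Conversely, given a solution $\sig_n$ at $E$ with $\sig_n(X) = \alpha\, p^{2^m} \beta$ for some nonempty $p$, extending by $\sig_n(P_i) = p^{2^i}$, $\sig_n(A) = \alpha$, $\sig_n(B) = \beta$ yields a solution of the extended system.

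To handle regular constraints and the side conditions of the free semigroup setting, I would branch the construction into polynomially many instances. First, I would nondeterministically select the variable $X \in \cX$ (one instance per $X$). Second, since $\alpha$ or $\beta$ might be empty, I would produce four variants per $X$, using the equations $X = A P_m B$, $X = P_m B$, $X = A P_m$, and $X = P_m$. Third, for each fresh variable $P_0, A, B$, I would enumerate its image under the constraint morphism: $\mu'(P_0) \in S$ (this determines $\mu'(P_i) = \mu'(P_0)^{2^i}$), and $\mu'(A), \mu'(B) \in S^1$, retaining only the triples that satisfy $\mu'(A)\,\mu'(P_0)^{2^m}\,\mu'(B) = \mu(X)$. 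Because $S$ is fixed, this introduces only a constant multiplicative blow-up per variable, so overall we produce $\Oh(\abs{\cX})$ word equation instances of polynomial size.

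The reduction then accepts iff at least one of these instances is satisfiable, which is the $\P$-Turing reduction claimed. The correctness follows by combining the forward and backward directions sketched above with \prref{prop:hme76}: if some instance is satisfiable then $\exp(E) \ge 2^m > \kappa$, hence $\exp(E) = \infty$; conversely, if $\exp(E) = \infty$, then \prref{prop:hme76} supplies the required witness for at least one variable $X$ and one of the branch configurations, making the corresponding instance satisfiable.
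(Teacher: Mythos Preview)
Your proposal is correct and follows essentially the same route as the paper: invoke \prref{prop:hme76} to bound $\kappa$ by $2^m$ with $m$ polynomial, then for each variable $X$ add fresh variables and doubling equations $P_{i+1}=P_iP_i$ together with $X=AP_mB$, and test the resulting systems for satisfiability. The paper's proof is terser---it writes the doubling chain as $X_{i-1}=X_iX_i$ and the factor condition as $X=YX_1Z$, and it does not spell out the handling of constraints for the fresh variables or the empty-prefix/suffix case---whereas you make these details explicit by branching over the four prefix/suffix variants and over the finitely many constraint values in the fixed $S$; this extra care is harmless and the overall argument is the same.
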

\begin{proof} 
As explained in \prref{sec:single}, we can encode a system by a single equation $(U = V,\mu)$.
It is shown in \cite{die98lothaire} that that we can choose a concrete value $m \in \Oh(|UV|\cdot |\OO|)$ so that $\kappa_E < 2^m$ holds for the number $\kappa_E$ in \prref{prop:hme76}.
Knowing $m$, we consider the variables $X\in \cX$ one after another. 
For each $X \in \cX$, we create a system of word equations with fresh variables $X_0 \lds X_m, Y, Z$, and additional equations $X = Y X_0 Z$ and $X_{i-1} = X_i X_i$ for $1 \leq i \leq m$.

Having this new system we decide whether it has a \solu. 
If the answer is negative for every variable $X \in \cX$, then $\exp(E) < \infty$. 
Otherwise, $\exp(E) = \infty$ by \prref{prop:hme76}.\footnote{The standard way to extend the result to free monoids is by guessing a constant $a$ and introducing a new equation $X_0=aX'_0$ where $X'_0$ is another fresh variable.}
\end{proof}

\subsection{Two-Variable Word Equations}\label{sec:TVW}

A \emph{two-variable word equation} is a word equation $U = V$ over a set of variables $\cX = \os{X,Y}$. 
They were studied deeply in \eg~\cite{IliePlandoswskiRAIRO2000} and further in 
\cite{DabrowskiPlandowskiICALP2004}. 
The main structural result about the full \solu set for two-variable word equations is Theorem~36 in \cite{IliePlandoswskiRAIRO2000}. 
Its proof is a careful case analysis. 
Inspecting these cases shows the following.

\begin{proposition}[Ilie, Plandowski]
\label{prop:IPR2000}
If a two-variable word equation $U = V$ has infinitely many \solus, then $\exp(U=V) =\infty$. 
\end{proposition}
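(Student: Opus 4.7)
The plan is to deduce the statement directly from the structural classification of \solu sets of two-variable word equations obtained by Ilie and Plandowski~\cite[Thm.~36]{IliePlandoswskiRAIRO2000}. That theorem writes $\Sol(U=V)$ as a finite union of explicitly parametrized families, each of which is indexed by finitely many natural-number parameters $n_1 \lds n_k$ and described by concrete formulas involving fixed words and \substs applied to a base \solu.

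The first step is to reduce to a single such parametric family. Assuming $\Sol(U=V)$ is infinite, the pigeonhole principle applied to the finite union yields at least one family $\cF$ which is itself infinite. Since $\cF$ is determined by its parameters $n_1 \lds n_k$ together with a fixed base, at least one of the $n_i$ must be unbounded on $\cF$.

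The second step is a case-by-case inspection of the formulas in Ilie and Plandowski's list. In every case, letting an unbounded parameter $n_i$ grow forces a factor of the form $p^{n_i}$, or $\pi^{n_i}(p)$ with $\pi$ a growing \subst, to appear inside $\sig_{\vec n}(X)$ or $\sig_{\vec n}(Y)$ for some fixed nonempty primitive word $p \in \Sig^+$. In either situation $\exp(\sig_{\vec n})$ tends to infinity with $n_i$, so $\exp(U=V) = \infty$ by \prref{def:defexpinf}; \prref{prop:hme76} then also yields the stronger conclusion that a constant $\kap$ is exceeded, as expected.

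The principal obstacle is the case analysis itself: Ilie and Plandowski distinguish several structurally different types of \solu families~--~according to the relative lengths of $X$ and $Y$, and according to which of the occurrence counts $\abs{U}_X$, $\abs{V}_X$, $\abs{U}_Y$, $\abs{V}_Y$ coincide~--~and each type must be checked in turn. The individual verifications are mechanical because the parametric formulas already exhibit the required power explicitly, but the bookkeeping across the full list of cases is where the real work lies.
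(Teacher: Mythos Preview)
Your approach is exactly the one the paper takes: the paper also derives the proposition from \cite[Thm.~36]{IliePlandoswskiRAIRO2000} and supplies no argument beyond the sentence ``Inspecting these cases shows the following,'' so your outline is already more detailed than what appears there.

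One word of caution on the execution: the assertion that a factor of the form $\pi^{n_i}(p)$ with $\pi$ a growing \subst automatically forces $\exp(\sig_{\vec n})\to\infty$ is not valid in general~--~iterates of a morphism can have bounded exponent of periodicity (the Fibonacci morphism is the standard example, and the paper itself notes in \prref{sec:cop} that $XabY=YbaX$ admits Fibonacci-type \solus with bounded exponent). When you carry out the inspection you should therefore check that each infinite family in Ilie--Plandowski's list actually produces an explicit power $p^{n_i}$ (or that some \emph{other} family in the same union does), rather than relying on a general principle about iterated \substs.
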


\section{Nice Semigroups}\label{sec:NSG}

As mentioned in the introduction, we are interested in when $|\Sol(U = V, \mu)| = \infty$ implies an infinite exponent of periodicity for a given word equation with constraints $(U = V, \mu)$.
This naturally leads to the following definition.

\begin{definition}\label{def:nicecon}
A finite semigroup~$S$ is \emph{nice} for a family of word equations~$\cE$, if the following condition holds for all equations $(U = V) \in \cE$ over $\OO^+ = (\Sig\cup \cX)^+$ and constraints $\mu: \OO^+ \to S$:
if $(U = V,\mu)$ has infinitely many \solu{s}, then $\exp(U = V,\mu)=\infty$.\hspace*{\fill}$\diamond$
\end{definition}

An immediate consequence of this definition is that a finite semigroup $S$ is nice for the family of word equations $\cE$ if and only if its left-right dual $\Sop$ is nice for the family $\cE^{\mathrm{rev}}$ consisting of all reversed equations $(U^{\mathrm{rev}} = V^{\mathrm{rev}})$ with $(U = V) \in \cE$.
In particular, if $\cE = \cE^{\mathrm{rev}}$, as for quadratic or two-variable equations, then $S$ is nice for $\cE$ if and only if $\Sop$ is.

A reformulation of our conjecture in \prref{sec:bcon} is the following.
\begin{conjecture} \label{conj:watnu}
The trivial semigroup is nice for the class of all word equations.
\end{conjecture}

\subsection{Divisors of Nice Semigroups}

The following oberservations, \prref{lem:niceS} and \prref{lem:niceH}, show
that the class of all finite semigroups which are nice for any given family of word equations $\cE$ is closed under taking \emph{divisors} (that is, homomorphic images of subsemigroups).

\begin{lemma}\label{lem:niceS}
  If $S$ is a nice semigroup for $\cE$, then so is every subsemigroup $S'$ of $S$.
\end{lemma}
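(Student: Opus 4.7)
The plan is to lift the constraint from $S'$ to $S$ via the inclusion map and then invoke niceness of $S$ directly. First, fix any equation $(U=V)\in\cE$ over $\OO^+=(\Sig\cup\cX)^+$ together with a constraint $\mu'\colon \OO^+\to S'$, and assume $(U=V,\mu')$ has infinitely many \solus. We must show $\exp(U=V,\mu')=\infty$.

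Let $\iota\colon S'\hookrightarrow S$ be the inclusion \homo, which exists because $S'$ is a subsemigroup of $S$, and set $\mu=\iota\mu'\colon \OO^+\to S$. Since $S$ is finite, $\mu$ is a legitimate regular constraint. The key observation is that for any mapping $\sig\colon\cX\to\Sig^+$ and any variable $X\in\cX$, we have $\mu\sig(X)=\mu(X)$ if and only if $\iota(\mu'\sig(X))=\iota(\mu'(X))$, and by injectivity of $\iota$ this is equivalent to $\mu'\sig(X)=\mu'(X)$. The conditions $\sig(U)=\sig(V)$ are independent of the constraint. Hence $\Sol(U=V,\mu')=\Sol(U=V,\mu)$.

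From this identity of \solu sets, $(U=V,\mu)$ likewise has infinitely many \solus, and moreover $\exp(U=V,\mu')=\exp(U=V,\mu)$ directly from \prref{def:defexpinf}. Since $S$ is nice for $\cE$ by hypothesis, we conclude $\exp(U=V,\mu)=\infty$, and therefore $\exp(U=V,\mu')=\infty$, as required. There is no genuine obstacle; the result is essentially a bookkeeping check that a constraint into $S'$ can be reinterpreted as one into $S$ without altering the \solu set or the exponent of periodicity.
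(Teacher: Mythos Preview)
Your proof is correct and follows essentially the same approach as the paper: both use the inclusion $\iota\colon S'\hookrightarrow S$ to identify $\Sol(U=V,\mu')$ with $\Sol(U=V,\iota\mu')$ and then invoke niceness of $S$. Your version is slightly more explicit in spelling out why injectivity of $\iota$ makes the constraint conditions equivalent, but the argument is otherwise identical.
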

\begin{proof}
  Let $S'$ be a subsemigroup of $S$, and let $\iota: S' \to S$ be the corresponding inclusion.
  Then, for all word equations $(U = V)$ over $\OO^+ = (\Sig\cup \cX)^+$ and all constraints $\mu': \OO^+ \to S'$, the solution sets $\Sol(U = V,\mu')$ and $\Sol(U = V, \iota\mu')$ coincide.
  Hence, $(U = V,\mu')$ has infinitely many solutions (resp.\ infinite exponent of periodicity) \IFF $(U = V, \iota \mu')$ does.
\end{proof}

\begin{lemma}\label{lem:niceH}
  If $S$ is a nice semigroup for $\cE$, then so is every homomorphic image $S'$ of $S$.
\end{lemma}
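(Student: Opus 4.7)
The plan is to use a simple pigeonhole argument on the finite set of lifts of $\mu'$ along the given epimorphism. Let $\phi: S \to S'$ be a surjective \hom, and let $(U = V) \in \cE$ with constraint $\mu': \OO^+ \to S'$ satisfy $\abs{\Sol(U = V, \mu')} = \infty$; the goal is to establish $\exp(U = V, \mu') = \infty$.

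First I would consider the set $M$ of all \homs $\mu: \OO^+ \to S$ with $\phi \mu = \mu'$. Since $\OO^+$ is free, each such $\mu$ is uniquely determined by its values on the finite alphabet $\OO$, and for every $\alpha \in \OO$ the value $\mu(\alpha)$ must lie in the (finite) fiber $\oi\phi(\mu'(\alpha)) \subseteq S$. Consequently $M$ is finite, of size at most $\abs{S}^{\abs{\OO}}$.

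The crucial step is the decomposition $\Sol(U = V, \mu') = \bigcup_{\mu \in M} \Sol(U = V, \mu)$. The inclusion ``$\supseteq$'' is immediate: if $\mu\sigma(X) = \mu(X)$ then applying $\phi$ yields $\mu'\sigma(X) = \mu'(X)$. For ``$\subseteq$'' I would take $\sigma \in \Sol(U = V, \mu')$, first fix an arbitrary lift of $\mu'$ on constants (possible because $\phi$ is surjective), and then define $\mu(X) := \mu\sigma(X)$ on variables; since $\phi(\mu\sigma(X)) = \mu'\sigma(X) = \mu'(X)$, this $\mu$ belongs to $M$, and by construction $\sigma$ lies in $\Sol(U = V, \mu)$.

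From here, the solution set on the left being infinite forces, by pigeonhole on the finite index set $M$, some $\mu \in M$ with $\abs{\Sol(U = V, \mu)} = \infty$. Niceness of $S$ for $\cE$ then gives $\exp(U = V, \mu) = \infty$, and the inclusion $\Sol(U = V, \mu) \subseteq \Sol(U = V, \mu')$ transports this to $\exp(U = V, \mu') = \infty$. There is no serious obstacle; the only subtlety worth flagging is that the lift $\mu$ must be chosen dependent on $\sigma$ (via $\mu(X) = \mu\sigma(X)$), rather than being fixed in advance by a set-theoretic section of $\phi$, because different \solus may induce different images of the variables under a lifted constraint.
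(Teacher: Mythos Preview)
Your argument is correct and is essentially the same as the paper's: construct, for each solution $\sigma$, a lift $\mu_\sigma$ of $\mu'$ by fixing lifts on constants and setting $\mu_\sigma(X)=\mu_\sigma(\sigma(X))$, then apply pigeonhole over the finitely many possible lifts to find one with infinitely many solutions, and transfer $\exp=\infty$ back via the inclusion $\Sol(U=V,\mu)\subseteq\Sol(U=V,\mu')$. The only cosmetic difference is that you phrase the covering as an explicit decomposition $\Sol(U=V,\mu')=\bigcup_{\mu\in M}\Sol(U=V,\mu)$, whereas the paper states the per-$\sigma$ construction directly; both amount to the same pigeonhole step.
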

\begin{proof}
  Let $\rho: S \to S'$ be an epimorphism.
  Further, suppose that $(U = V) \in \cE$ is a word equation over $\OO^+ = (\Sig\cup \cX)^+$ such that $\abs{\Sol(U = V,\mu')} = \infty$ where $\mu': \Omega^+ \to S'$.
  Let us first note that for every $\sig \in \Sol(U = V,\mu')$ there is some constraint $\mu_\sig: \Omega^+ \to S$ with $\mu' = \rho\mu_\sig$ such that $\sig \in \Sol(U = V,\mu_\sig)$.
  Indeed, choosing $\mu_\sig(a) \in \oi\rho(\mu'(a))$ arbitrarily for all $a \in \Sig$ and setting $\mu_\sig(X) = \mu_\sig(a_1)\mu_\sig(a_2) \cdots \mu_\sig(a_n)$ where $\sig(X) = a_1a_2 \cdots a_n$ with $a_i \in \Sig$ for all $X \in \cX$ defines such a constraint $\mu_\sig$.
  Since there exist only a finite number of constraints $\mu: \OO^+ \to S$, we must have $\abs{\Sol(U = V,\mu)} = \infty$ for some $\mu$ with $\rho\mu = \mu'$ by the pigeonhole principle.

  Assuming that $S$ is nice for~$\cE$, we deduce that $\exp{(U = V,\mu)} = \infty$.
  In turn, this implies $\exp{(U = V,\mu')} = \infty$, as required, since $\Sol(U = V,\mu)$ is contained in $\Sol(U = V,\mu')$.
\end{proof}

\subsection{Extensions of Nice Semigroups}\label{sec:nilpot}

An \emph{extension} of a semigroup $S$ is a semigroup $T$ that contains $S$ as an ideal.
Such an extension is \emph{retractive} if there exists a homomorphism $\rho: T \to S$ with $\rho(x) = x$ for all $x \in S$; that is, if $S$ is a retract of $T$.
It is \emph{nilpotent} if $T^k \sse S$ for some $k$ or, equivalently, if the Rees quotient $T/S$ (in which all elements of $S$ are identified) is a nilpotent semigroup.

Recall the notion of (basic) \subst in \prref{def:defsubst}.
A \subst~$\tau$, thus an \Endo of $\OO^+$ leaving $\Sig$ invariant, is called \emph{trivial} if for all $X\in \cX$ we have 
$\tau(X)\in \Sig^*X \cup \Sig^+$. Note that every trivial \subst
can be written as a sequence of basic \substs.

\begin{theorem}\label{thm:ell}
  Let $\cE$ be a family of word equations which is closed under trivial \subst{s}.
  Further, let $T$ be a finite semigroup which is a retractive nilpotent extension of a semigroup~$S$.
  Then $S$ is a nice semigroup for $\cE$ \IFF $T$ is a nice semigroup for $\cE$.
\end{theorem}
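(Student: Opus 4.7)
Plan: The implication $(\Leftarrow)$ is immediate from \reflem{lem:niceS}, since $S$ is an ideal of $T$, hence in particular a subsemigroup, so niceness transfers.

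For $(\Rightarrow)$, suppose $S$ is nice for $\cE$ and consider $(U=V)\in\cE$ with constraint $\mu\colon\Omega^+\to T$ such that $|\Sol(U=V,\mu)|=\infty$. I would prove $\exp(U=V,\mu)=\infty$ by strong induction on $|\cX|$, based on two consequences of $T^k\sse S$: \textbf{(F1)} if $\mu(X)\in T\sm S$ then every solution $\sigma$ satisfies $|\sigma(X)|<k$, since $\mu(\sigma(X))=\mu(X)\notin S$ rules out $\mu(\sigma(X))\in T^k$; and \textbf{(F2)} for every word $w\in\Sig^+$ of length at least $k$, $\mu(w)\in T^k\sse S$, whence $\mu(w)=\rho\mu(w)$, so \emph{all-long} assignments (those with $|\sigma(X)|\geq k$ for every $X$) are $\mu$-solutions if and only if they are $\rho\mu$-solutions.

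By (F1) and a pigeonhole over the finitely many short tuples on the variables with $\mu(X)\notin S$, an infinite subfamily of $\mu$-solutions agrees on these variables; a trivial substitution, permitted by closure of $\cE$, then yields a strictly smaller equation in $\cE$ to which induction applies. Hence I may assume $\mu(\cX)\sse S$. A further pigeonhole on which variables are short among the remaining infinite $\mu$-solutions either reduces $|\cX|$ (again invoking induction) or leaves the case where infinitely many $\mu$-solutions are all-long. In that case, (F2) gives $|\Sol(U=V,\rho\mu)|=\infty$, niceness of $S$ applied to $\rho\mu\colon\Omega^+\to S$ yields $\exp(U=V,\rho\mu)=\infty$, and \refprop{prop:hme76} supplies $Y\in\cX$, $p\in\Sig^+$, and $\rho\mu$-solutions $\tilde\sigma_n$ with $p^n$ appearing in $\tilde\sigma_n(Y)$.

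If infinitely many $\tilde\sigma_n$ are themselves all-long, then by (F2) they are $\mu$-solutions with exponent of periodicity at least $n$ and we are done. Otherwise, a pigeonhole on the short parts of the $\tilde\sigma_n$ fixes a nonempty $\cX_3\sse\cX\sm\oneset{Y}$ and a short assignment $\sigma_3$; the trivial substitution eliminating $\cX_3$ yields $(U'=V')\in\cE$ with $|\cX'|<|\cX|$, and by (F2) the restrictions $\tilde\sigma_n|_{\cX\sm\cX_3}$ are in fact $\mu$-solutions of the reduced equation (not merely $\rho\mu$-solutions, since the remaining components are all-long), so the induction hypothesis gives $\exp(U'=V',\mu)=\infty$ and lifting back furnishes the required infinite exponent for the original equation.

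The main obstacle I anticipate lies in this final lift: since $\sigma_3$ is chosen from $\rho\mu$-solutions, only $\rho\mu(\sigma_3(X))=\mu(X)$ is automatic, whereas compatibility of the lift with the original constraint demands the stronger $\mu(\sigma_3(X))=\mu(X)$. Overcoming this will require one further refinement of the pigeonhole over the finite fibre $\rho^{-1}(\mu(X))$ of possible $\mu$-images of $\sigma_3(X)$, combined with the plentiful supply of all-long $\mu$-solutions from the previous reduction to guarantee that a $\mu$-compatible choice can be arranged, or, failing that, to recast the incompatible subcase as a strictly smaller problem already covered by the induction.
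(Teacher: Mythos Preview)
Your $(\Leftarrow)$ direction and the initial reductions in $(\Rightarrow)$ are fine, and you have correctly identified the key facts (F1) and (F2). The gap you flag in the final paragraph is, however, genuine and not resolved by the sketch you give. After fixing the short assignment $\sigma_3$ on $\cX_3$ from the $\rho\mu$-solutions $\tilde\sigma_n$, you only know $\rho\mu(\sigma_3(X))=\mu(X)$; since $\sigma_3(X)$ is short, $\mu(\sigma_3(X))$ may lie in $T\setminus S$ and differ from $\mu(X)$. A further pigeonhole over the fibre $\rho^{-1}(\mu(X))$ only records what $\mu(\sigma_3(X))$ \emph{is}, not that it equals $\mu(X)$; and the supply of all-long $\mu$-solutions is of no help here, because those are a disjoint family from the $\tilde\sigma_n$ and you have no mechanism to transport the high exponent of periodicity from the latter to the former. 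Recasting as a smaller problem does not work either: changing the constraint to $\mu''(X)=\mu(\sigma_3(X))$ on $\cX_3$ yields $\exp(U=V,\mu'')=\infty$, but that is not what you need, and $|\cX|$ has not decreased.

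The paper avoids this obstacle by a single clean trick applied \emph{before} invoking niceness of $S$, rather than after. Once every variable is essential (so that all but finitely many $\mu$-solutions are all-long), pigeonhole on the length-$k$ prefixes of the infinitely many all-long $\mu$-solutions gives words $w_X\in\Sigma^k$ with $\sigma(X)\in w_X\Sigma^+$ for infinitely many $\sigma$. The trivial substitution $X\mapsto w_X X$ (allowed by closure of $\cE$) then produces an equation $E'_\rho$ with constraint in $S$ and infinitely many solutions. Niceness of $S$ gives solutions $\sigma'$ of $E'_\rho$ with arbitrarily large exponent; lifting back via $\sigma(X)=w_X\sigma'(X)$ yields solutions of $E_\rho$ that are \emph{automatically} all-long, hence by (F2) are $\mu$-solutions of the original equation. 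No induction on $|\cX|$ and no post-hoc repair of constraints is needed.
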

\begin{proof}
If $T$ is nice for $\cE$, then its subsemigroup (and homomorphic image) $S$ is nice for $\cE$ by \prref{lem:niceS} (or \prref{lem:niceH}). 
Thus, it suffices to show that $T$ is nice for $\cE$ \mbox{assuming that $S$ is so.}

Suppose that the word equation $(U = V) \in \cE$ with constraints $\mu: \OO^+ \to T$ has infinitely many solutions; that is, $\abs{\Sol(U = V,\mu)} = \infty$.
We will show that this implies $\exp(U = V,\mu) = \infty$.

Let us call a variable $X$ \emph{inessential} if, for some fixed $w \in \Sigma^+$, the equation~$E = (U = V,\mu)$ has infinitely many solutions $\sigma$ with $\sigma(X) = w$.
Note that we can simply eliminate such a variable $X$ by following the trivial substitution defined by $X \mapsto w$, since $E$ has infinite exponent of periodicity whenever the new word equation does so.
Hence, we may as well assume that none of the variables is inessential.
For every $k \in \N$, all but finitely many of the solutions $\sigma$ of the word equation $E$ must therefore satisfy $\abs{\sigma(X)} > k$ for all $X \in \cX$.

Next, let us fix a number $k$ large enough so that $T^{k+1} \sse S$.
From the above we see that $\mu(X) \in S$ for every $X \in \cX$.
Let us also fix a retraction $\rho \colon T \to S$, and consider the word equation $E_\rho = (U=V, \rho\mu)$ with constraints in $S$.
As every solution of $E$ is a solution of $E_\rho$, the latter has infinitely many solutions $\sigma$ with $\abs{\sigma(X)} > k$ for all $X \in \cX$.
Note that, conversely, every solution $\sigma$ of $E_\rho$ with $\abs{\sigma(X)} > k$ for all $X \in \cX$ is also a solution of $E$.
We claim that there are such solutions with arbitrarily large exponent of periodicity.

Indeed, by the pigeonhole principle, there exist words $w_X \in \Sigma^k$ such that $\sigma(X) \in w_X \Sigma^+$ holds for all $X \in \cX$ for infinitely many solutions $\sigma$.
Hence, some $E'_\rho = (U' = V', \mu') \in \cE$ with constraints in $S$ and infinitely many solutions can be obtained from $E_\rho$ by the trivial substitution $X \mapsto w_X X$.
By assumption on $S$ there are solutions $\sigma'$ of $E'_\rho$ with arbitrarily large exponent of periodicity.
The corresponding solutions $\sigma$ of $E_\rho$ with $\sigma(X) = w_X \sigma'(X)$ have the desired property since $\abs{\sigma(X)} = \abs{w_X} + \abs{\sigma'(X)} \geq k + 1$ holds for all $X \in \cX$.
\end{proof}

In order to apply \prref{thm:ell} we need at least one nonempty finite semigroup $S$ which is nice for a family of word equations $\cE$. 
By \prref{lem:niceH} (or \prref{lem:niceS}) the trivial semigroup would then be nice for $\cE$. 
Hence,  
if \prref{conj:watnu} holds for some family of equations $\cE$ (without constraints), then all finite nilpotent semigroups are nice for this family~$\cE$.
A first example are two-variable word equations, which we briefly discussed in \prref{sec:TVW}. 
Based on \prref{prop:IPR2000} we can state a more general form of this proposition by using \prref{thm:ell}.

\begin{corollary}\label{cor:IPR2000}
  Every finite nilpotent semigroup is nice for two-variable word equations.
\end{corollary}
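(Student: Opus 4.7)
The plan is to combine \refprop{prop:IPR2000} with \refthm{thm:ell} applied to the family $\cE$ of all two-variable word equations.

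First, I would verify the hypothesis on $\cE$ required by \refthm{thm:ell}, namely closure under trivial substitutions. Recall from \refdef{def:defsubst} that a trivial substitution $\tau$ satisfies $\tau(X) \in \Sig^*X \cup \Sig^+$ for every $X \in \cX$, so the set of variables occurring in $\tau(U)=\tau(V)$ is a subset of the set of variables occurring in $U = V$. Hence, if $U = V$ has at most two variables, then so does $\tau(U) = \tau(V)$, and $\cE$ is closed under trivial substitutions.

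Second, I would check that the trivial one-element semigroup $S = \os{e}$ is nice for $\cE$. Any constraint $\mu\colon \OO^+ \to S$ is constant, so the condition $\mu\sig(X) = \mu(X) = e$ holds automatically for every mapping $\sig\colon \cX \to \Sig^+$. Consequently $\Sol(U = V,\mu) = \Sol(U = V)$ equals the unconstrained solution set, and the niceness of $S$ is an immediate restatement of \refprop{prop:IPR2000}.

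Third, I would observe that every nonempty finite nilpotent semigroup $T$ is a retractive nilpotent extension of its trivial subsemigroup $\os{0}$, where $0$ denotes the zero of $T$. Indeed, $\os{0}$ is an ideal of $T$ by the defining property of the zero, the nilpotency of $T$ yields $T^k \sse \os{0}$ for some $k \in \N_+$, and the constant map $\rho\colon T \to \os{0}$, $x \mapsto 0$, is a homomorphism fixing $\os{0}$ pointwise. Applying \refthm{thm:ell} with $S = \os{0}$ then gives that $T$ is nice for $\cE$; the empty semigroup case is vacuous since there is no constraint $\mu\colon \OO^+ \to \es$. I do not foresee a significant obstacle here: once the three ingredients above are assembled, the corollary falls out as a direct formal consequence of \refprop{prop:IPR2000} and \refthm{thm:ell}, relying only on the standard fact that a finite nilpotent semigroup retracts onto its zero.
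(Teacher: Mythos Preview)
Your proposal is correct and follows exactly the route the paper intends: use \refprop{prop:IPR2000} to see that the trivial semigroup is nice for the family $\cE$ of two-variable equations, verify that $\cE$ is closed under trivial substitutions, and then apply \refthm{thm:ell} to pass from the trivial semigroup $\os{0}$ to any finite nilpotent semigroup as a retractive nilpotent extension. The only addition you supply beyond the paper's one-line justification is the explicit check of the closure hypothesis and of the retraction $T \to \os{0}$, both of which are routine.
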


Unfortunately we were not able to prove that the trivial semigroup is nice for the class of all word equations over free semigroups.

\section{Quadratic Word Equations}\label{sec:freeQWE}
A system $\cS=\os{(U_1,V_1)\lds (U_n,V_n)}$ of word equations is called \emph{quadratic} if every variable occurs at most twice in its equations, that is, $\abs{U_1\cdots U_nV_1\cdots V_n}_X \leq 2$ holds for all $X \in \cX$.
As explained in \prref{sec:single}, it will suffice for our purposes to deal with (single) quadratic word equations $U = V$. 
Throughout, we denote the class of all quadratic word equations by $\cQE$.

The following \prref{prop:trivial} states that the trivial semigroup is nice for the family of quadratic word equations. 
This result was initially worked out in 2019 by Bastien Laboureix in his (unpublished) stage report~\cite{Laboureix2019} for the Laboratoire Spécification et Vérification (now at the ENS Paris-Saclay, France) when he was on leave in Stuttgart as part of a collaboration between Benedikt Bollig and the first author.
The approach of Laboureix was based on ideas of Olga Kharlampovich (personal communication). 
The proof of the following proposition, however, is from the bachelor's thesis of the second author \cite{natterer2024quadratic}. 
While this proposition is a special case of \prref{thm:mainiac}, we give a simpler and more direct proof here as a warm-up, as this proof also contains some of the main ideas used later in a more general setting.

Since, for every semigroup, there is only one \hom $\mu$ to the trivial semigroup~$\os 1$, we suppress the specification of $\mu$ and speak about 
\emph{word equations without constraints}.

\begin{proposition}\label{prop:trivial}
The trivial semigroup is nice for the family of quadratic word equations. 
Recall, this means that every quadratic word equation over free semigroups without constraints has either only finitely many \solus or it has an infinite exponent of periodicity.
\end{proposition}
\begin{proof}
Let $U=V$ be a quadratic equation in a free semigroup and, as usual, $\OO$ be the disjoint union of the set of constants $\Sig$ and the set of variables $\cX$.
We distinguish between several types of variables. 
The set of variables~$X\in \cX$ which appear~$i$~times in~$UV$ is denoted by $\cX_i$; thus, $\cX = \cX_0 \cup \cX_1 \cup \cX_2$.
Moreover, we say that a variable $X\in \cX_2$ is \emph{balanced} if it appears on both sides of the equation, that is, $|U|_X=|V|_X=1$.
Otherwise, $X \in \cX_2$ is called \emph{unbalanced}.

The proof is by induction, primarily on the number of variables and, secondarily, on the length~$\abs{UV}$.
We assume without restriction that the equation~$U = V$ has infinitely many solutions.
Thus, $\Sig\neq \es \neq \cX$. 
We begin by dealing with the following simple cases.
\begin{itemize}
  \item Suppose that $\cX_0 \neq \es$.
Then we fix some solution $\sig$ of the equation~$U = V$ and some~$X \in \cX_0$.
Let $a \in \Sigma$ be any letter.
For all $n \in \N$ there is a solution $\sig_n$ with $\sig_n(X) = a^n$ and $\sig_n(Y) = \sig(Y)$ for all $Y \in \cX\sm \os X$. Clearly, $\exp(\sig_n) \geq n$. 
Hence, we can assume $\cX_0 = \es$.

\item Suppose that both sides of the equation $U = V$ begin with a constant.
  Then they must begin with the same constant $a \in \Sigma$, for otherwise the equation has no solution.
  We then have $U = aU'$ and $V = aV'$ with $U', V' \in \Omega^+$, for if $U' = 1$ or $V' = 1$, then the equation has at most one solution (since $\cX_0 = \es$).
We then continue with the equation $U' = V'$, which has the same solutions but is shorter.
So, we are done by induction.

\item Suppose that both sides of the equation $U = V$ begin with the same variable; that is, we have $U = XU'$ and $V = XV'$ for some $X \in \cX$ and $U', V' \in \Omega^\ast$. 
Then we may simply remove the variable $X$ from the equation, leaving us in a situation where $\cX_0 \neq \es$.
Henceforth, we therefore also assume that $U$ and $V$ do not begin with the same variable.

\item Suppose that one side of the equation begins with a balanced variable.
Up to symmetry, we then have $U = XU'$ and $V = V'_1XV'_2$ for some $X \in \cX$, $V'_1 \in \Omega^+$, and $U', V'_2 \in \Omega^*$.
Let us then fix some solution $\sig$ out of the infinitely many ones at our disposal.
Observe that the word $v = \sig(V'_1)$ is nonempty since $V'_1 \neq 1$.
Thus, we obtain an infinite family of solutions~$\sig_n$ defined by $\sig_n(X) = v^n \sig(X)$ and $\sig_n(Y) = \sig(Y)$ for all $Y \in \cX$ with $Y \neq X$, and these solutions clearly satisfy $\exp(\sig_n) \geq n$.
\end{itemize}
Our goal is now to reduce the general situation to one of the simple cases above.
We do so by repeatedly applying a suitable substitution $\tau \colon \Omega^+ \to \Omega^+$, thereby transforming the equation $U = V$ into a new equation $\tau(U) = \tau(V)$, or an equivalent one, that still has infinitely many solutions.
This relies on the fact that every solution $\sigma'$ of $\tau(U) = \tau(V)$ gives rise to a corresponding solution $\sigma = \sigma' \tau$ of the original equation $U = V$, and these solutions satisfy $\exp(\sigma) \geq \exp(\sigma')$.
Using this method, we can handle the following additional simple case.
\begin{itemize}
\item Suppose that one side of the equation $U = V$ consists of a single variable.
Up to symmetry, we may then assume $U = X$ for some $X \in \cX$.
By the above cases, we may further assume that $\abs{V}_X = 0$.
The substitution $\tau$ defined by $X \mapsto V$ then turns the equation into the trivial but equivalent equation $V = V$.
Since this substitution eliminates the variable $X$, we are done by induction on the number of variables.
\end{itemize}
From now on, one side of the equation begins with a variable.
If exactly one side begins with a variable, then, up to symmetry, we have $U = XU'$ and $V = aV'$ for some $X \in \cX$, $a \in \Sig$ and $U', V' \in \Omega^\ast$.
We may assume that $U', V' \in \Omega^+$, since $U'=1$ is the simple case we just discussed, and $V' = 1$ implies that the equation has at most one solution.

If the substitution $\tau$ defined by $X \mapsto a$ leads to an equation $\tau(U') = \tau(V')$ with infinitely many solutions, then we are done by induction on the length of the equation.
Otherwise, we are forced to continue with the equation $X \tau(U') = \tau(V')$ where $\tau$ is the substitution defined by $X \mapsto aX$.
Since we may assume that $X$ is not balanced, we have $\tau(V') = V'$, which is shorter than $V = aV'$.
Hence, by repeating this step if necessary, we eventually
arrive at the situation where both sides of the equation begin with a variable; that is, we have $U = XU'$ and $V = YV'$ with $X,Y\in \cX$ and $X \neq Y$.
Note that we can assume $U', V' \in \Omega^+$.

If the substitution $\tau$ defined by $X \mapsto Y$ leads to an equation $\tau(U') = \tau(V')$ with infinitely many solutions, then we are done by induction on the number of variables.
Otherwise, we are forced to continue with a substitution defined by $X \mapsto YX$ or by $Y \mapsto XY$.

We make a case distinction. 
Firstly, let $X \in \cX_1$ and suppose that we use a substitution~$\tau$ defined by $X\mapsto YX$.
This leads to the new equation $XU' = V'$.
This equation is shorter and, hence, if $XU' = V'$ still has infinitely many solutions, then we are done by induction.

This means that we can treat variables in $\cX_1$ as if they were constants in the following sense. If we see an equation $XU' = YV'$ where $X\in \cX_2$ and $Y \in \cX_1$, 
then we continue with the \subst defined by $X\mapsto YX$ because
the other choice $Y\mapsto XY$ leads to a shorter equation.
Thus, the (only) case of interest is an equation $XU' = YV'$ where $X,Y\in \cX_2$ are unbalanced.
By symmetry we continue using a substitution defined by $X\mapsto YX$, which then leads to $XU_1YXU_2 = V_1YV_2$. 
Note that this makes $Y$ balanced.
The crucial observation is that, in the described process, it is impossible for $Y$ to become unbalanced again: this could only happen if one of the sides begins with the now balanced $Y$, but at this point the process stops. 
Thus, this argument bounds the number of successive equations where both sides start with an unbalanced variable by $|\cX|$, completing the proof.
\end{proof}

An immediate application of \prref{thm:ell} yields 
the following.

\begin{corollary}\label{cor:proptrivial}
Every finite nilpotent semigroup is nice for the family $\cQE$. 
\end{corollary}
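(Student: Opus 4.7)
The plan is to deduce the corollary from \prref{prop:trivial} via \prref{thm:ell}, taking $S$ to be the trivial semigroup. Two things need to be checked: that the class $\cQE$ of quadratic word equations is closed under trivial \substs, and that every finite nilpotent semigroup $T$ is a retractive nilpotent extension of a trivial semigroup.

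For the first point, recall that a trivial \subst~$\tau$ sends each variable $X$ either to a word $wX$ with $w \in \Sig^*$ or to a nonempty constant word $w \in \Sig^+$. In the former case, every occurrence of $X$ in the equation is replaced by a factor containing exactly one occurrence of $X$; in the latter case, all occurrences of $X$ vanish. Either way, no variable ends up with more occurrences in $\tau(U) = \tau(V)$ than it had in $U = V$, so the quadratic property is preserved. For the second point, recall that a nilpotent semigroup~$T$ possesses a zero $0$ with $T^k = \os 0$ for some $k \in \N$. Thus $\os 0$ is a trivial one-element subsemigroup which is an ideal of~$T$, the inclusion $T^k \sse \os 0$ witnesses that the extension is nilpotent, and the constant map $\rho \colon T \to \os 0$, $x \mapsto 0$, is a homomorphism fixing~$0$, hence a retraction.

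Combining these two observations, the corollary follows immediately: \prref{prop:trivial} gives that the trivial semigroup is nice for~$\cQE$, and \prref{thm:ell} then transfers this property to the finite nilpotent semigroup~$T$. There is no genuine difficulty here; the argument is a direct synthesis of the machinery developed in \prref{sec:NSG} with the key technical result of \prref{sec:freeQWE}. In fact, the same reasoning shows, more generally, that any family of word equations closed under trivial \substs and nice for the trivial semigroup is automatically nice for every finite nilpotent semigroup.
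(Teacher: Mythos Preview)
Your proof is correct and follows exactly the route the paper intends: the paper states the corollary as ``an immediate application of \prref{thm:ell}'' to \prref{prop:trivial}, and you have simply made explicit the two verifications (closure of $\cQE$ under trivial \substs, and that a finite nilpotent semigroup is a retractive nilpotent extension of $\{0\}$) that the paper leaves implicit.
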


\subsection{The Matiyasevich Solution Graph}
In 1968, \Mati~\cite{Matiyasevich68} defined a \solu graph of linear size in the input for showing that the satisfiability problem for quadratic equations without constraints and without \invol is decidable in $\PSPACE$ for free semigroups and for free monoids. 
The same approach copes with an \invol and regular constraints. It also provides a simple method to decide whether or not a quadratic equation has infinitely many \solu{s}.

Let $n\in \N$ and let $S$ be a fixed semigroup. 
The \emph{\MatiSG} for the given $n$ is defined as an edge-labeled directed graph $\cA_n$. The vertices are called \emph{states} and the edges, labeled by basic \subst{s} according to \prref{def:defsubst}, are called \emph{transitions}.

Each state is a triple $(U = V,\cX,\mu)$ with $|UV| \leq n$ and $|\cX| \leq n$.
The state is called \emph{final}, if $U = a = V$ for some $a \in \Sigma$ and if $\cX = \emptyset$. Next we describe the transitions.

We start with a state of the form $(\alpha U = \alpha V, \cX, \mu)$ with $\alpha \in \Omega$. We then have a unique outgoing \epstra, labeled by the identity, to the state $(U = V,\cX,\mu)$.

Every other state $(U = V,\cX,\mu)$ always has the following two outgoing transitions for every $X \in \cX$ which does not appear in $UV$ and for every $a \in \Sigma$.
\begin{itemize}
    \item A transition labeled $\tau(X) = aX$ to the state $(U = V,\cX,\mu')$.
    \item A transition labeled $\tau(X) = a$ to the state $(U = V,\cX\setminus\{X\}, \mu')$.
\end{itemize}
Moreover, if a state can be written as $(XU = \alpha V, \cX, \mu)$ then we have the following two additional outgoing transitions.
\begin{itemize}
    \item A transition labeled $\tau(X) = \alpha X$ to the state $(X\tau(U) = \tau(V), \cX, \mu')$.
    \item A transition labeled $\tau(X) = \alpha$ to the state $(\tau(U) = \tau(V), \cX\setminus\{X\}, \mu')$.
    \item There are also symmetric
  rules for $\alpha V=XU$.

\end{itemize}
The constraints $\mu'$ are always chosen such that $\mu = \mu'\tau$.
If this is not possible, the corresponding transition does not exist.
If $\mu'$ can be chosen in several different ways, then multiple copies of the respective transition are introduced.

Given a quadratic equation $(U = V,\mu)$ with constraints and $|UV|\leq n$, we let $\cX$ be the set of variables occurring in $UV$ and set $E = (U = V,\cX,\mu)$.
Note that $E$ is a state in $\cA_n$.
We can construct an NFA $\cA(E)$ from $\cA_n$ as follows. 
First, we declare $E$ to be the unique initial state, keeping the final states already defined, and then we trim the automaton accordingly.
The resulting automaton $\cA(E)$ is called the \emph{(Matiyasevitch) solution graph} of~$E$.

\begin{figure}[ht]
\begin{center}
\begin{tikzpicture}[yscale=0.18, xscale=0.46]
\node (1) at (10, 20) {$\mathbf{XabY = YbaX}$};

\node (2a) at (0, 10) {$\mathbf{XabY = baYX}$};
\node (3a) at (0, 20) {$\mathbf{XabY = aYbX}$};
\node (4a) at (0, 30) {$abY = Yba$};
\node (5a) at (0, 40) {$baY = Yba$};
\node (6a) at (0, 0) {$bY = Yb$};
\node (7a) at (10, 0) {$b = b$};

\node (2b) at (20, 30) {$\mathbf{abXY = YbaX}$};
\node (3b) at (20, 20) {$\mathbf{bXaY = YbaX}$};
\node (4b) at (20, 10) {$Xab = baX$};
\node (5b) at (20, 0) {$Xab = abX$};
\node (6b) at (20, 40) {$Xa = aX$};
\node (7b) at (10, 40) {$a = a$};

\draw[->,thick] (1) edge node[above left, scale=0.8] {$Y \mapsto XY$} (2b);
\draw[->,thick] (2b) edge node[right, scale=0.8] {$Y \mapsto aY$} (3b);
\draw[->,thick] (3b) edge node[above, scale=0.8] {$Y \mapsto bY$} (1);
\draw[->,thick] (3b) edge node[right, scale=0.8] {$Y \mapsto b$} (4b);
\draw[->,thick] (4b) edge[bend left=25] node[right, scale=0.8] {$X \mapsto bX$} (5b);
\draw[->,thick] (5b) edge[bend left=25] node[left, scale=0.8] {$X \mapsto aX$} (4b);
\draw[->,thick] (2b) edge node[right, scale=0.8] {$Y \mapsto a$} (6b);
\draw[->,thick] (6b) edge[loop above, looseness=50] node[above, scale=0.8] {$X \mapsto aX$} (6b);
\draw[->,thick] (6b) edge node[above, scale=0.8] {$X \mapsto a$} (7b);
\draw[->,thick] (4b) edge[bend right=40] node[above left, scale=0.8] {$X \mapsto b$} (7a);

\draw[->,thick] (1) edge node[below right, scale=0.8] {$X \mapsto YX$} (2a);
\draw[->,thick] (2a) edge node[left, scale=0.8] {$X \mapsto bX$} (3a);
\draw[->,thick] (3a) edge node[below, scale=0.8] {$X \mapsto aX$} (1);
\draw[->,thick] (3a) edge node[left, scale=0.8] {$X \mapsto a$} (4a);
\draw[->,thick] (4a) edge[bend left=25] node[left, scale=0.8] {$Y \mapsto aY$} (5a);
\draw[->,thick] (5a) edge[bend left=25] node[right, scale=0.8] {$Y \mapsto bY$} (4a);
\draw[->,thick] (2a) edge node[left, scale=0.8] {$X \mapsto b$} (6a);
\draw[->,thick] (6a) edge[loop below, looseness=50] node[below, scale=0.8] {$Y \mapsto bY$} (6a);
\draw[->,thick] (6a) edge node[below, scale=0.8] {$Y \mapsto b$} (7a);
\draw[->,thick] (4a) edge[bend right=40] node[below right, scale=0.8] {$Y \mapsto a$} (7b);
\end{tikzpicture}
\caption{The \Mati solution graph of the equation $XabY = YbaX$ without constraints.}
\label{fig:XabX}
\end{center}
\end{figure}
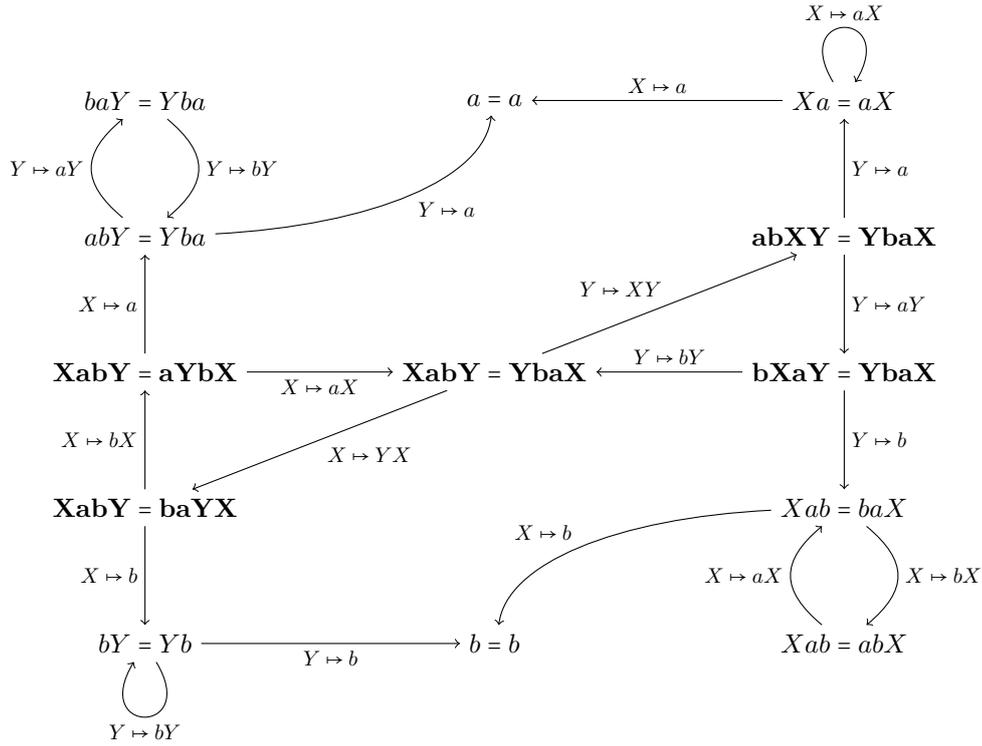

\begin{example}
    The diagram in \prref{fig:XabX} shows the solution graph of the quadratic equation $XabY = YbaX$ without constraints where some \epstra{s} are omitted for clarity.
    Unsolvable states do not appear as the automaton is trim. The equations in boldface form the strongly connected component where both $X$ and $Y$ appear in the equation. 
\end{example}
\begin{proposition}\label{prop:matiya}
  On input $n\in \N$, the \MatiSG $\cA_n$ can be effectively constructed.
Moreover, for every state $E$ of the automaton $\cA_n$, the following 
properties hold.
\begin{itemize}
  \item The trim subautomaton $\cA(E)$ accepts the language $\Sol(E)$ (where composition of endo\-morphisms is defined in reverse order).
\item We have $|\Sol(E)|=\infty$
\IFF $\cA(E)$ contains some nontrivial cycle.
\end{itemize}
\end{proposition}
\begin{proof}
That $\cA_n$ can be effectively constructed is obvious.
Now, let $E$ be a state of $\cA_n$.

If $\pi$ is an accepting path in $\cA(E)$, then $\pi$ is labeled 
by an \Endo\ $\tau_m\cdots \tau_1$ where each $\tau_i$ is a \subst, and
defines a \solu~$\sig$ with 
$\sig(X)= \tau_m\cdots \tau_1(X)$ for~all~$X\in \cX$.

By induction on the length of a \solu, it is easy to see that 
 every \solu has such a factorization. 
Since $\cA(E)$ is finite, we obtain that $|\Sol(E)|=\infty$
\IFF there is an accepting path using a \tra{} within some strongly connected component of $\cA(E)$.
\end{proof}
For example, \prref{fig:XabX} shows that the equation $XabY = YbaX$ has infinitely many \solu{s}.

\begin{remark}\label{rem:genWE}
Both problems related to the questions in the next \prref{cor:matiya} remain decidable for 
all word equations with regular constraints in free monoids or semigroups (with or without \invol) and extend to free groups with rational constraints in \cite{DiekertJP16}~and~\cite{CiobanuDiekertElder2016ijac}.
There are also extensions to virtually free groups \cite{DiekertElder2020ijac}
and, even further, to hyperbolic groups~\cite{CiobanuElder2021}. 
All these papers are quite technical. 
Dealing with quadratic equations makes life nicer, as the proof follows directly from properties of the \MatiSG $\cA_n$.
\hspace*{\fill}$\diamond$\end{remark}

The next corollary is a direct consequence of \prref{prop:matiya} since we can encode a system of quadratic word equations as a single quadratic equation as in \prref{sec:single}. 

\begin{corollary}\label{cor:matiya}
On input of a system $\cS$ of quadratic word equations over $\OO=\Sig \cup \cX$, it is decidable whether $(\cS,\mu)$ has some \solu, and whether $(\cS,\mu)$ has infinitely many \solu{s}.
\end{corollary}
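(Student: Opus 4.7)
The plan is to derive the corollary directly from \prref{prop:matiya} after first reducing the system $\cS$ to a single quadratic equation. The reduction from \prref{sec:single} introduces a fresh constant $\#$ and the zero element $0$ adjoined to $S$, and it encodes $\cS=\{(U_1,V_1)\lds (U_n,V_n)\}$ as the single equation $U_1\#\cdots U_n\# = V_1\#\cdots V_n\#$ with the extended constraint $\tilde\mu$. Crucially, this encoding does not introduce any new variables and does not change the number of occurrences of any variable, so the encoded equation is again quadratic. Moreover, the set of \solus of $(\cS,\mu)$ coincides with the set of \solus of the encoded equation $(U=V,\tilde\mu)$, in particular both have the same cardinality.

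Next, I would set $n = |UV|$ and effectively construct the \MatiSG $\cA_n$, as guaranteed by the first bullet of \prref{prop:matiya}. Since each state is a triple $(U'=V',\cX',\mu')$ with $|U'V'|\leq n$, $|\cX'|\leq n$, and $\mu'$ a \morph into the finite semigroup $\tilde S$, the state set is finite and the construction terminates. Designating the encoded equation $E=(U=V,\cX,\tilde\mu)$ as the unique initial state and trimming produces the automaton $\cA(E)$ in finite time.

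By the first item of \prref{prop:matiya} we have $L(\cA(E))=\Sol(E)$, so $(\cS,\mu)$ has a \solu \IFF some final state of $\cA(E)$ is reachable from $E$; this is decidable by ordinary reachability in a finite graph. By the second item of \prref{prop:matiya}, $(\cS,\mu)$ has infinitely many \solus \IFF $\cA(E)$ contains some nontrivial strongly connected component (lying on an accepting path); this is decidable by computing the strongly connected components of $\cA(E)$ and checking whether any of them lies on a path from the initial state to a final state. Both tests are routine graph-theoretic computations on a finite graph.

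There is essentially no hard step here, since \prref{prop:matiya} already packages the substantive content (the correctness of the automaton $\cA(E)$ and the characterisation of infiniteness via nontrivial strongly connected components). The only point that deserves a brief check is that the encoding of $\cS$ as a single equation preserves quadraticity and preserves the number of \solus, which is immediate from the description in \prref{sec:single}.
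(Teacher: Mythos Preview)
Your proposal is correct and follows essentially the same approach as the paper: reduce the system to a single quadratic equation via \prref{sec:single} (which preserves quadraticity) and then invoke \prref{prop:matiya}. You merely spell out the routine graph-theoretic decision procedures that the paper leaves implicit; note also that since $\cA(E)$ is trim, the parenthetical ``lying on an accepting path'' is automatic.
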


\section{The Main Result}\label{sec:sobeauty}
Using the Matiyasevitch solution graph defined in \prref{sec:freeQWE}, we will prove our main result on the exponent of periodicity of quadratic equations with certain constraints (which are defined in \prref{sec:lst}, and discussed in more detail in \prref{sec:ddlg}).
To do so, we first introduce a special type of equation for which the desired claim is particularly easy to establish.
The general case of quadratic equations is then reduced to this special case.
In order to achieve this reduction, we establish a set of invariants of the strongly connected components of the corresponding solution graph.

\subsection{\texorpdfstring{$\cL$-Stabilizers}{L-Stabilizers}}\label{sec:lst}

The following notation plays a key role in our construction.

\begin{definition}\label{def:lst}
  Let $S$ be a finite semigroup.
  We say that an element $u \in S^1$ is an \emph{$\cL$-stabilizer} of an element $x \in S$ provided that $u^{\oo}x = x$.
  We denote the set of all such elements by 
  \[
    \stab_{\cL}(x) = \set{u \in S^1}{ u^{\oo} x = x}
  . \hspace*{\fill} \diamond \]
\end{definition}

This terminology stems from the observation that the orbit of an element $x$ under the action of multiplication on the left by any power of an $\cL$-stabilizer $u$ of $x$ is completely contained in $\cL(x)$. 
Indeed, if $u^\oo x = x$, then $x = u^\oo x \leL u^n x \leL x$ holds for all $n \in \N$.

Subsequently, we will be interested in the following class of finite semigroups for which the converse of the above implication holds (thereby justifying our terminology) and for which $\cL$-stabilizers are particularly well-behaved.
 As we will see later, this class of semigroups coincides with the variety \pv{DLG} of finite semigroups where each regular $\cD$-class is a right group or, equivalently, each regular $\cL$-class is a group; see \prref{lem:lst2} and \prref{lem:lst1}.
However, the arguments in our proofs rely only on the conditions given in \prref{def:DLRG}.
We therefore adopt these conditions as a preliminary technical definition of the class $\pv{DLG}$.

\begin{definition}\label{def:DLRG}
  We say that a finite semigroup $S$ belongs to the class \pv{DLG} if the following conditions are satisfied for all $x,y \in S$ and $u,v \in S^1$.
\begin{bracketenumerate}
\item If $u^{\oo} x = x$ and $v^{\oo} x = x$, then $(uv)^{\oo} x = x$; that is, $\stab_{\cL}(x)$ is a submonoid of $S^1$.
\item If $x \siJ y$, then $u^{\oo} x = x$ if and only if $u^{\oo}y = y$; that is, $x \siJ y \implies \stab_{\cL}(x) = \stab_{\cL}(y)$.
\item If $ux \siL x$, then we have $u^{\oo}x = x$. 
\hspace*{\fill}$\diamond$
\end{bracketenumerate}
\end{definition}

\subsection{Nicely Balanced Equations}\label{sec:origSN}
In \prref{sec:NSG} we defined when a finite semigroup is nice. 
The counterpart for equations is the following definition, which is not restricted to quadratic word equations.
Here, given a \hom $\mu:\OO^+\to S$ to a finite semigroup $S$ and a symbol $\alpha \in \OO$, we write 
\begin{align}\label{eq:lmux}
  L_{\mu,\alpha} = \set{w \in \Sigma^+}{\mu(w) = \mu(\alpha)}.
\end{align}

\begin{definition} \label{def:nice}
Let $U=V$ be an equation with regular constraints $\mu$ and 
with a set of variables~$\cX$. 
It is called \emph{nicely balanced with regard to $X \in \cX$} if, firstly, $\abs{L_{\mu,X}} = \infty$ and, secondly, up to symmetry in $U$ and $V$ one of the following conditions holds.
\begin{itemize}
\item We have $|UV|_X=0$. That is, $X$ does not appear in $U = V$.
\item We have $U = Xu$ and $V = vXv'$ with $u,v,v' \in (\Omega\sm\os X)^*$ and $\mu(v) \in \stab_{\cL}(\mu(X))$.
\hspace*{\fill}$\diamond$
\end{itemize}
\end{definition}

\begin{lemma} \label{lem:nice}
  Let $(U = V, \cX, \mu)$ be a solvable word equation which is nicely balanced with regard to $X \in \cX$ (whether or not it is quadratic).
  Then $\exp(U = V, \cX, \mu) = \infty$.
\end{lemma}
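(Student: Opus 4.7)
My plan is to split the proof into two cases according to the second bullet of \prref{def:nice} and, in each case, to construct from a given solution~$\sig$ a family of solutions whose exponents of periodicity grow without bound.

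First, consider the case $\abs{UV}_X = 0$. Given any solution $\sig$, I can replace $\sig(X)$ by any word $w \in \mu^{-1}(\mu(X)) \cap \Sig^+$ without affecting $\sig(U) = \sig(V)$, and the regular constraint is preserved by construction. Since $\mu^{-1}(\mu(X)) \cap \Sig^+$ is an infinite regular language (it is the preimage of a singleton under a \morph to a finite semigroup), the pumping lemma supplies words $\alpha \beta^n \gamma$ in this language with $\beta \in \Sig^+$ for arbitrarily large $n$, hence words of unbounded exponent of periodicity.

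Now consider the main case $U = Xu$ and $V = vXv'$ with $\mu(v) \in \stab_\cL(\mu(X))$. I first dispose of the degenerate situation where $v$ is empty: left cancellation in $\Sig^+$ reduces the equation to $\sig(u) = \sig(v')$, which decouples~$X$, so the first case applies to $\sig(X)$. Otherwise $\sig(v) \in \Sig^+$, and the key move is to pump $\sig(X)$ from the left by $\sig(v)$. Fix $\oo \in \N_+$ such that $x^\oo$ is idempotent for every $x \in S$, and for each $k \in \N_+$ define $\sig_k(X) = \sig(v)^{k\oo}\sig(X)$ and $\sig_k(Y) = \sig(Y)$ for $Y \neq X$. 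The stabilizer hypothesis $\mu(v)^\oo \mu(X) = \mu(X)$ iterates to $\mu(\sig_k(X)) = (\mu(v)^\oo)^k \mu(X) = \mu(X)$, so the constraint is respected. A short calculation using $\sig(X)\sig(u) = \sig(v)\sig(X)\sig(v')$ shows that both $\sig_k(U)$ and $\sig_k(V)$ collapse to $\sig(v)^{k\oo + 1}\sig(X)\sig(v')$, so each $\sig_k$ is again a solution. Since $\sig(v) \in \Sig^+$, the factor $\sig(v)^{k\oo}$ inside $\sig_k(X)$ gives $\exp(\sig_k) \geq k\oo$, which tends to infinity with~$k$.

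The only delicate part is ensuring that the pumped assignments keep respecting the regular constraint, which is precisely where the $\cL$-stabilizer hypothesis is consumed. The verification reduces to the easy observation that if $e \in S^1$ satisfies both $e = e^2$ and $e\mu(X) = \mu(X)$, then $e^k \mu(X) = \mu(X)$ for all $k \geq 1$. Everything else is routine combinatorics on words, so I do not expect any significant obstacle beyond keeping the cases tidy.
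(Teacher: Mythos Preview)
Your proposal is correct and follows essentially the same approach as the paper: both split into the case $\abs{UV}_X = 0$ (handled via the pumping lemma on the infinite regular language $\mu^{-1}(\mu(X)) \cap \Sig^+$) and the case $U = Xu$, $V = vXv'$ (handled by prepending powers $\sig(v)^{k\oo}$ to $\sig(X)$, after reducing the degenerate subcase $v = 1$ to the first case). The constraint check via the idempotent $\mu(v)^{\oo}$ and the equation check via $\sig(X)\sig(u) = \sig(v)\sig(X)\sig(v')$ are exactly the paper's computations.
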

\begin{proof}
Starting with some solution $\sig \in \Sol(U = V, \cX, \mu)$, we will create new solutions with high exponents of periodicity by only changing the value of $\sig(X)$.
There are two cases.
\begin{itemize}
\item Suppose that $|UV|_X = 0$.
  Then any word $w \in L_{\mu, X} \sse \Sigma^+$ induces a solution $\sig_w$ by setting $\sig_w(X) = w$ and $\sig_w(Y) = \sig(Y)$ for all $Y \in \cX \sm \os X$.
  The claim now follows from the pumping lemma for regular languages, since $\abs{L_{\mu, X}} = \infty$ by hypothesis.
\item Suppose that $U = Xu$ and $V = vXv'$ with $u, v, v' \in (\Omega\setminus\{X\})^*$ and $\mu(v) \in \stab_{\cL}(\mu(X))$.
  If $v = 1$, then the equation $U = V$ is equivalent to an equation in which $X$ no longer occurs and we are done by the first case.
    Indeed, if $u \neq 1 \neq v'$, then $U = V$ is equivalent to the equation $u = v'$.
  Otherwise, if $u = 1$ or $v' = 1$, then we must have $u = 1 = v'$ in order for $U = V$ to have a solution. 
  In this case, the equation is of the form $X = X$ and, hence, equivalent to any equation of the form $a = a$ with $a \in \Sigma$. 
  If $v \neq 1$, then we fix a number $\oo\in \N_+$ such that $\mu(v)^\oo$ is idempotent. 
For any $m\in \N$, let~$\sig_m$ be defined by $\sig_m(X) = \sig(v^{m\omega}X)$  and $\sig_m(Y) = \sig(Y)$ for all $Y \in \cX \sm \os X$.
Then $\sigma_m$ solves the equation and has high exponent of periodicity due to $\sig(v) \neq 1$.
The constraints are also satisfied since we have $\mu(\sig_m(X)) = \mu(\sig(v))^{m\omega} \mu (\sig(X)) = \mu(v)^{\omega}\mu(X) = \mu(X)$, where the second equality uses that $\sig$ is a solution of $(U = V, \cX, \mu)$ and the third uses $\mu(v) \in \stab_{\cL}(\mu(X))$.
\end{itemize}
In either case, $(U = V, \cX, \mu)$ has solutions with arbitrarily high exponent of periodicity.
\end{proof}

\subsection{Strongly Connected Components and Leaders}\label{sec:SCCCL}
As usual, a \emph{cycle} of a directed graph is a nonempty directed
path starting and ending in the same vertex; and a \emph{strongly connected component} $C$ is an induced subgraph such that between any two vertices $p,q\in C$ there 
is a directed path from~$p$ to~$q$.

Let $\cA(E_0)$ be the \MatiSG for some quadratic equation $E_0$ with constraints in a finite semigroup~$S$. 
In particular, each state in $\cC$ has at least one outgoing transition. 
Note that each state in $\cC$ is solvable since $\cA(E_0)$ is trim by definition.

The following lemma characterizes the transitions within~$\cC$.
\begin{lemma} \label{lem:sntrans}
  Consider a \tra{} from $E = (U = V, \cX, \mu)$ to $E' = (U' = V', \cX', \mu')$ within~$\cC$.
  Then the following hold.
\begin{bracketenumerate}
\item The transition is labeled by $X \mapsto \alpha X$ for some $X \in \cX$ and $\alpha \in \Omega$.
\item We have $\mu(X) \leq_\cR \mu(\alpha)$ and, \ip $\mu(X) \leq_\cJ \mu(\alpha)$.
\item We have $\mu(X) \sim_\cL \mu'(X)$ and, \ip $\mu(X) \sim_\cJ \mu'(X)$.
\item We have $|L_{\mu,X}| = \infty$ where $L_{\mu,X}$ is defined by \prref{eq:lmux}.
\end{bracketenumerate}
\end{lemma}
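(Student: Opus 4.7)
The four claims will all be deduced from the structural fact that cycles inside~$\cC$ must preserve both the number of variables and the total length of the equation. My plan is to first establish~(1) as a monotonicity argument, then read off~(2) directly from the defining relation of the substitution, obtain~(3) from strong connectivity together with~(1), and finally iterate the cycle substitution to prove~(4).

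For~(1), I note that $|\cX|$ is monotone non-increasing along every transition of $\cA_n$ and strictly decreases along the constant substitutions $X \mapsto a$ and $X \mapsto \alpha$, which therefore cannot occur inside~$\cC$. Next, an $\eps$-transition strictly decreases the equation length $|UV|$ by~$2$. Finally, for a transition $X \mapsto \alpha X$ from a state $(XU = \alpha V)$, a direct count shows that the length changes by $|U|_X + |V|_X - 1$, which is non-positive since the equation is quadratic and the leading~$X$ already accounts for one of the at most two occurrences of~$X$. Because a cycle must restore $|UV|$ to its original value, $\eps$-transitions and length-decreasing $X \mapsto \alpha X$ transitions cannot occur inside~$\cC$; only labels of the form $X \mapsto \alpha X$ survive.

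For~(2), the substitution~$\tau$ associated to the transition satisfies $\tau(X) = \alpha X$ and fixes every other symbol, so the defining relation $\mu = \mu' \tau$ reads $\mu(X) = \mu'(\alpha)\mu'(X)$. Because $\alpha \neq X$ in either of the two sub-cases of~(1), we have $\mu'(\alpha) = \mu(\alpha)$, and hence $\mu(X) \in \mu(\alpha) S^1$, i.e.~$\mu(X) \leq_\cR \mu(\alpha)$ and, \ip $\mu(X) \leq_\cJ \mu(\alpha)$. The same identity gives $\mu(X) \leq_\cL \mu'(X)$, one half of~(3). For the reverse inequality, I will exploit the strong connectivity of~$\cC$: there exists a path $E' \to \cdots \to E$ whose composed substitution $\tau^*$ satisfies $\mu' = \mu \circ \tau^*$. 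By~(1), every transition along this path is of the form $Y \mapsto \beta Y$, and a direct induction then shows that $\tau^*(X) = wX$ for some $w \in \OO^*$, so $\mu'(X) = \mu(w)\mu(X) \leq_\cL \mu(X)$.

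For~(4), let $\sigma^* = \tau_n \cdots \tau_1$ denote the substitution of a cycle $E \to E' \to \cdots \to E$ with $\tau_1(X) = \alpha X$. The same reasoning as in~(3) shows that $\sigma^{*k}(X)$ ends in~$X$, and because $\tau_1$ strictly lengthens~$X$ while subsequent $Y \mapsto \beta Y$ substitutions are non-contracting, a straightforward induction yields $|\sigma^{*k}(X)| \geq k + 1$. Composing $\sigma^{*k}$ with any accepting path from~$E$, which exists because $\cA(E_0)$ is trim, produces solutions $\sigma_k \in \Sol(E)$ with $|\sigma_k(X)| \geq |\sigma^{*k}(X)| \to \infty$, each satisfying $\mu(\sigma_k(X)) = \mu(X)$, and hence $|L_{\mu(X)}| = \infty$. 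The main obstacle is the length-tracking argument of~(1), where the quadratic hypothesis enters essentially; the remaining parts are direct structural consequences of the shape of cycle transitions pinned down there.
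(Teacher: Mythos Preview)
Your proof is correct and follows the same overall structure as the paper's. For items~(1)--(3) your argument is essentially identical, only more explicit: the paper disposes of~(1) in a single line (``all other transitions reduce the size of~$\cX$''), which strictly speaking does not address $\eps$-transitions, whereas your length-monotonicity argument handles them cleanly; and for~(3) the paper simply invokes strong connectivity where you spell out the return-path computation $\tau^*(X)=wX$ and the resulting inequality $\mu'(X)\leq_\cL\mu(X)$.

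The only genuine difference is in~(4). The paper argues via the inclusion $L_{\mu(\alpha)}L_{\mu'(X)}\subseteq L_{\mu(X)}$ (immediate from $\mu(X)=\mu(\alpha)\mu'(X)$): if all the sets $L_{\mu''(X)}$ along a cycle through~$E$ were finite, the maximal word length in $L_{\mu''(X)}$ would strictly increase at each $X$-transition yet return to its initial value after one full loop, a contradiction. You instead iterate the cycle substitution and compose with an accepting suffix to manufacture explicit solutions $\sigma_k$ with $|\sigma_k(X)|\to\infty$. Both routes are short and natural; yours is more constructive, while the paper's stays purely on the level of the constraint languages $L_s$ and needs only solvability of~$E$ (to get $L_{\mu(\alpha)}\neq\emptyset$) rather than an actual accepting path.
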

\begin{proof}
  Item $(1)$ follows since all other transitions reduce the size of the equation or of $\cX$.
  Item $(2)$ then follows from $\mu(X) = \mu(\alpha)\mu'(X)$.
  This also shows $\mu(X) \leq_\cL \mu'(X)$ and, since $\mu(Y) \leq_\cL \mu'(Y)$ holds for all $Y \in \cX$ and all transitions within $\cC$, this implies~$(3)$.

  To see the item $(4)$, observe that $L_{\mu,\alpha} \cdot L_{\mu',X} \sse L_{\mu,X}$ and that $L_{\mu,\alpha} \neq \es$ since the state $E$ is solvable.
  Moreover, $L_{\mu', Y} = \os{1} \cdot L_{\mu', Y} = L_{\mu, Y}$ for all $Y \in \cX$ with $Y \neq X$.
  Since the transition is within $\cC$, there is thus a nonempty set $K \sse \Sigma^*$ with $L_{\mu, \alpha} \cdot K \cdot L_{\mu, X} \sse L_{\mu, X}$ and, in particular, some $v \in \Sig^+$ with $v \cdot L_{\mu, X} \sse L_{\mu, X}$.
  As $L_{\mu, X}$ is nonempty, it is thus infinite.
\end{proof}

\begin{lemma} \label{lem:snleader}
  Let $E = (\alpha U = \beta V, \cX, \mu)$ be any equation in~$\cC$ with $\alpha, \beta \in \Omega$.
  Then the minimum of $\cJ$-classes $J_E = \min\{\cJ(\mu(\alpha)), \cJ(\mu(\beta))\}$ is well-defined and independent of the chosen state $E$ in~$\cC$.
  In particular, the \emph{leading $\cJ$-class} $J_\cC = J_E$ of~$\cC$ is well-defined.
\end{lemma}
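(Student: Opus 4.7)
The plan is to establish the claim in two stages: well-definedness of $J_E$ for each $E \in \cC$, and invariance of $J_E$ along transitions within $\cC$.

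\textbf{Preliminary observation.} Every state $E = (U = V, \cX, \mu)$ reachable in the Matiyasevich solution graph from a fresh initial state satisfies $\cX = \operatorname{vars}(UV)$: this invariant is preserved by $\eps$-transitions, by substitutions $X \mapsto \alpha X$ (in the second-kind form), and by substitutions $X \mapsto \alpha$ (which remove $X$ from both $\cX$ and $UV$). In particular, first-kind transitions (labeled $X \mapsto a X$ with $X \in \cX$ not appearing in $UV$) never actually apply, so every outgoing transition in $\cC$ is either an $\eps$-transition or a second-kind transition.

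\textbf{Step 1: Well-definedness.} Let $E = (\alpha U = \beta V,\cX,\mu) \in \cC$. If $\alpha = \beta$, then $\mu(\alpha) = \mu(\beta)$ and $J_E = \cJ(\mu(\alpha))$ is immediate. Otherwise $\alpha \neq \beta$. Nontriviality of $\cC$ provides at least one outgoing transition of $E$ in $\cC$; by the preliminary observation, it is not first-kind, and it cannot be an $\eps$-transition since $\alpha \neq \beta$. Hence it is second-kind $X \mapsto \gamma X$ with $\{X,\gamma\} = \{\alpha,\beta\}$ and $X$ the variable. Lemma~\ref{lem:sntrans}(2) yields $\mu(X) \leq_\cR \mu(\gamma)$, hence $\mu(X) \leq_\cJ \mu(\gamma)$; comparability is established, and $J_E = \cJ(\mu(X))$.

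\textbf{Step 2: Invariance.} For each $E \in \cC$, introduce the auxiliary element $s_E = \mu(\alpha U) = \mu(\beta V)$ (well-defined because every solution $\sigma$ of $E$ satisfies $\mu\sigma = \mu$). A direct computation shows that every transition $E \to E'$ in $\cC$ satisfies $s_E = \mu(\gamma)\,s_{E'}$, where $\gamma$ is the ``consumed'' leading letter: for an $\eps$-transition, $\gamma$ is the common leading letter of $E$; for a second-kind transition $X \mapsto \gamma X$, the unchanged identity $s_{E'} = \mu'(X)\mu(U) = \mu(V)$ together with $s_E = \mu(\gamma)\mu(V)$ gives the claim. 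Thus $s_E \leq_\cL s_{E'}$, so traversing a cycle $E = E_0 \to \cdots \to E_k = E$ produces $s_E = \mu(w)\,s_E$ with $w = \gamma_1 \cdots \gamma_k$; hence $\mu(w)^{\oo}$ is an $\cL$-stabilizer of $s_E$, and $\cJ(s_E)$ is invariant on $\cC$.

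Finally, I would identify $J_E = \cJ(s_E)$. The inequality $\cJ(s_E) \leq J_E$ follows from $s_E \leq_\cR \mu(\alpha),\mu(\beta)$. For the reverse, the same iteration, combined with Lemma~\ref{lem:sntrans}(3) at each second-kind transition that keeps $X$ leading, gives $\mu(X) = \mu(w)\mu(X)$, so $e := \mu(w)^{\oo}$ is an idempotent $\cL$-stabilizer of both $\mu(X)$ and $s_E$; since $s_E = \mu(X)\mu(U) \in eS^1$ and $\mu(X) \in eS^1$ share the same $\cR$-upper-bound pattern, one extracts $\mu(X) \sim_\cJ s_E$, yielding $J_E = \cJ(s_E)$ and hence invariance.

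\textbf{Main obstacle.} The trickiest aspect is bookkeeping in the second-kind case when the cycle switches which side's variable is leading (for instance, via the symmetric substitution $Y \mapsto X Y$ when both sides begin with variables): the relevant ``leading variable'' along the cycle must be carefully tracked so that the iteration $\mu(X) = \mu(w)\mu(X)$ remains valid and compatible with the analogous identity for $s_E$.
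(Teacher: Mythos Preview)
Your Step~1 essentially matches the paper, but the route you take for invariance (Steps~2--3) does not work.

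The identification $J_E = \cJ(s_E)$ in Step~3 is false in general. Take the two-element semilattice $S = \{e, 0\}$ (so $e^2 = e$ and all other products are $0$), constants $a, b$ with $\mu(a) = e$ and $\mu(b) = 0$, and the quadratic equation $Xab = aXb$ with $\mu(X) = e$. This state is solvable (via $\sigma(X) = a^n$) and the transition $X \mapsto aX$ is a self-loop, so the state is its own nontrivial strongly connected component. Both first letters have $\mu$-value $e$, so $J_E = \cJ(e)$; yet $s_E = \mu(Xab) = e \cdot e \cdot 0 = 0$, whence $\cJ(s_E) = \cJ(0) \neq J_E$. The failure is structural: $s_E$ absorbs the $\cJ$-class of \emph{every} letter in the word, including trailing ones, so there is no way to ``extract $\mu(X) \sim_\cJ s_E$'' from the data you assemble. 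Your Step~2 is correct but establishes the wrong invariant.

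The paper's argument is more direct and avoids any auxiliary quantity. After a transition $X \mapsto \alpha X$ within $\cC$, one side of the new state $E'$ again begins with $X$, and $\mu'(X) \sim_\cJ \mu(X)$ by Lemma~\ref{lem:sntrans}(3). Hence $J_{E'} \leq_\cJ \cJ(\mu'(X)) = J_E$ immediately; strong connectivity then forces equality throughout $\cC$. This one-step monotonicity of $J_E$ along transitions is the key idea you are missing.

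A smaller point: your preliminary observation is incorrect as stated. An $\eps$-transition that strips a variable appearing exactly once at the head of each side leaves that variable in $\cX$ but removes it from $UV$, so the invariant $\cX = \operatorname{vars}(UV)$ fails afterward. This is not fatal here---such $\eps$-transitions strictly shorten the equation and so cannot lie in $\cC$, while first-kind transitions preserve $(U,V)$ and hence $J_E$ trivially---but the justification you wrote does not hold.
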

\begin{proof}
  The minimum $\cJ$-class $J_E = \min\os{\cJ(\mu(\alpha)), \cJ(\mu(\beta))}$ is well-defined since $E$ is~solvable.
  Indeed, for any solution $\sigma$ of $E$, one of the words $\sigma(\alpha)$ and $\sigma(\beta)$ is a prefix of the other and, hence, $\mu(\alpha) = \mu(\sigma(\alpha))$ and $\mu(\beta) = \mu(\sigma(\beta))$ are comparable in $\cR$- and, hence, in $\cJ$-order.

  Since~$\cC$ is nontrivial, there is at least one outgoing transition in $\cC$, which is never an \epstra.
  Moreover, such a transition can change the value of $J_E$ only if it affects $\alpha$~or~$\beta$.
  Therefore, using \prref{lem:sntrans}~$(1)$ and by symmetry, we may assume that $\beta = X \in \cX$ and that~$E$ has an outgoing transition to some state $E'$ in $\cC$ with label $X \mapsto \alpha X$.
  Then \prref{lem:sntrans}~$(2)$ implies that $J_E = \cJ(\mu(X))$.
  For any such transition, we then have $E' = (\tau(U), X\tau(V), \mu')$ so that $J_{E'} \leJ \cJ(\mu'(X)) = \cJ(\mu(X)) = J_E$ by \prref{lem:sntrans}~$(3)$.
  Since we remain within $\cC$, this implies $J_E = J_{E'}$.
  Thus, $J_{\cC}$ is also well-defined.
\end{proof}

\subsection{Playgrounds and Players}

This section makes heavy use of the properties from \prref{def:DLRG}. 
From now on, we thus assume that all constraints $\mu : \OO^+ \to S$ are with respect to semigroups $S \in \pv{DLG}$.

Combining the condition $(2)$ in \prref{def:DLRG} with \prref{lem:snleader} allows us to define the following subset of $N$, which we call the \emph{leading $\cL$-stabilizer} of $\cC$:
\[
  S_{\cC}= \stab_{\cL}(J_\cC).
\]
Given an equation $E = (U = V,\cX, \mu) \in \cC$, we determine maximal prefixes $\wt{U}$ and $ \wt{V}$ of $U$ and $V$, respectively, such that every proper prefix consists only of symbols $\alpha \in \Omega$ with $\mu(\alpha) \in S_\cC$. 
We call the resulting pair $(\wt{U}, \wt{V})$ the \emph{playground} of $E$.
The \emph{size} of this playground is $\wt{n}_E = \abs{\wt{U}\wt{V}}$, and the \emph{players} of $E$ are the variables belonging to 
\[
  \wt{\cX}_E = \set{X \in \cX }{ \cJ(\mu(X)) = J_\cC, \; |L_{\mu,X}| = \infty, \text{ and $X$ occurs twice in $\wt{U}\wt{V}$}}.
\] 
The following establishes that both $\wt{n}_E$ and $\wt{\cX}_E$ are invariants of $\cC$.

\begin{lemma} \label{lem:playground}
  Let $E = (U = V, \cX, \mu) \to E' = (U' = V', \cX', \mu')$ be a transition in~$\cC$ which is labeled by the substitution $\tau$ defined by $X \mapsto \alpha X$.
  Then we have $\wt{\cX}_E = \wt{\cX}_{E'}$ and $\wt{n}_E = \wt{n}_{E'}$.  
  Moreover, if $|UV|_X \geq 1$, then $X \in \wt{\cX}_E$ is a player of $E$ and $\mu(\alpha) \in S_\cC$.
\end{lemma}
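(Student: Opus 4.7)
The plan is to proceed by case analysis on $|UV|_X$, tracking how the substitution $\tau\colon X\mapsto \alpha X$ reshapes the playground.

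If $|UV|_X = 0$, the transition rules force $E$ and $E'$ to share the same equation and differ only in the constraint at $X$. Since $X$ does not appear in $UV$, it belongs to neither $\wt\cX_E$ nor $\wt\cX_{E'}$, and the playground and the remaining players are determined by the common equation together with $\mu = \mu'$ on the symbols of $UV$. Hence the claimed invariances are immediate, and the Moreover clause is vacuous.

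If $|UV|_X \geq 1$, then $E = (XU_0 = \alpha V_0, \cX, \mu)$ and $E' = (X\tau(U_0) = \tau(V_0), \cX, \mu')$. First I establish $\mu(\alpha) \in S_\cC$: from $\mu = \mu'\tau$ one has $\mu(X) = \mu(\alpha)\mu'(X)$, and item~$(3)$ of \reflem{lem:sntrans} gives $\mu(X) \siL \mu'(X)$. Combining these, $\mu(\alpha)\mu'(X) \siL \mu'(X)$, so part~$(3)$ of \prref{def:DLRG} yields $\mu(\alpha)^\oo \mu'(X) = \mu'(X)$, i.e.\ $\mu(\alpha) \in \stab_\cL(\mu'(X))$. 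Since $\mu'(X) \in J_\cC$ by \reflem{lem:snleader}, part~$(2)$ of \prref{def:DLRG} identifies this stabilizer with $S_\cC$.

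Next I show that every transition in $\cC$ preserves both the total equation length and $\wt n$. A direct count yields that the length changes by $|UV|_X - 2$ when $|UV|_X \geq 1$ and by zero otherwise; in particular, length is non-increasing and decreases strictly iff $|UV|_X = 1$. For the playground size I further analyse the case $|UV|_X = 2$ by distinguishing whether the second $X$ lies inside $\wt V_E$, inside $\wt U_E$ (with $\mu(X) \in S_\cC$), or outside the playground, and verify that $\wt n_{E'} \leq \wt n_E$ with equality iff the second occurrence of $X$ lies within $\wt U_E \wt V_E$. Because $E$ belongs to a non-trivial strongly connected component, there is a cycle through $E$ along which length and $\wt n$ must return to their initial values, forcing every transition of $\cC$, in particular $E \to E'$, to preserve both. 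This yields $\wt n_E = \wt n_{E'}$, and when $|UV|_X \geq 1$, necessarily $|UV|_X = 2$ with the second $X$ inside the playground. Combined with $\cJ(\mu(X)) = J_\cC$ from \reflem{lem:snleader} and $\abs{L_{\mu(X)}} = \infty$ from item~$(4)$ of \reflem{lem:sntrans}, this shows $X \in \wt\cX_E$.

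The identity $\wt\cX_E = \wt\cX_{E'}$ then follows: $\mu$ and $\mu'$ agree on $\cX \sm \os X$, so the defining conditions $\cJ(\mu(Y)) = J_\cC$ and $\abs{L_{\mu(Y)}} = \infty$ match for every $Y \neq X$; the substitution $\tau$ preserves the number of occurrences of each such $Y$ inside the playground; and $X$ itself is a player of both $E$ and $E'$ by the analysis above. The main difficulty lies in the bookkeeping underlying the refined inequality $\wt n_{E'} \leq \wt n_E$ and its equality condition: one must align positions of symbols on both sides of the transition precisely enough to verify that the changes to the LHS and RHS playgrounds cancel out. The DLG axioms enter here through the equivalence $\mu(X) \in S_\cC \iff \mu'(X) \in S_\cC$, a consequence of $\mu(X) \siJ \mu'(X)$, which keeps the extension behaviour of the LHS playground consistent across the transition.
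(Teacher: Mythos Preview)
Your proposal is correct and follows essentially the same approach as the paper: reduce to the case $|UV|_X\geq 1$, use \reflem{lem:sntrans}(3) together with property~(3) of \prref{def:DLRG} to obtain $\mu(\alpha)\in S_\cC$, then perform a case split on the location of the second occurrence of $X$ to show that both the equation length and $\wt n$ are non-increasing, and finally invoke strong connectivity to rule out the decreasing cases. The paper organizes the case split slightly differently (three cases: $X$ once, $X$ twice outside the playground, $X$ twice inside) but the content is the same; your additional subdivision of the ``inside'' case into $\wt U_E$ versus $\wt V_E$ is harmless extra bookkeeping.
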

\begin{proof}
The claim is trivial in case $|UV|_X = 0$.
Hence, we may assume that $U = Xu = U'$ and $V = \alpha V'$.
Item $(3)$ in \prref{lem:sntrans} yields $\mu'(X) \sim_\cL \mu(\alpha)\mu'(X)$. 
Since $S\in \pv{DLG}$ by hypothesis, the third property in \prref{def:DLRG} implies $\mu(\alpha) \in \stab_{\cL}(\mu'(X))$.
Thus, $\mu(\alp) \in S_{\cC} = \stab_{\cL}(J_\cC)$ as $\cJ(\mu'(X)) = \cJ(\mu(X)) = J_\cC$.

In particular, the symbol $\alpha$ is part of the playground of~$E$, and therefore contributes to the size $\wt{n}_E$ and possibly to the players $\wt{\cX}_E$.
Considering all possible ways in which the variable $X\in \cX$ might occur in $E$ leads us to the following case distinction.
\begin{itemize}
  \item Suppose that $X$ occurs only once in $E$.
    Then we can deduce that $\abs{U'V'} < \abs{UV}$.
  \item Suppose that $X$ occurs again, but outside of the playground of $E$. 
    Then the transition removes $\alpha$ from the playground of $E'$ and thus $\wt{n}_{E'} = \wt{n}_E - 1$ and $\wt{\cX}_{E'} = \wt{\cX}_E \setminus \{\alpha\}$.
  \item Suppose that $X$ occurs again inside the playground of $E$. 
    Then $X$ is a player because, firstly, $\cJ(\mu(X))=\cJ_C$ and, secondly, $\abs{L_{\mu,X}} = \infty$ by $(4)$ in \prref{lem:sntrans}. 
    Moreover, as $\mu(X) = \mu(\alpha)\mu'(X)$ and $\mu(\alpha) \in S_\cC = \stab_{\cL}(\mu'(X))$, we have $\mu'(X) = \mu(\alpha)^{\omega - 1}\mu(X)$. 
    Hence, the sets $L_{\mu, X}$ and $L_{\mu', X}$ are in bijection, so that $\abs{L_{\mu', X}} = \infty$.
  Now let $(X\wt{u}, \alpha\wt{v})$ be the playground of $E$. Since~$X$ is a player and~$\mu(\alpha) \in S_\cC$, we then obtain the playground of~$E'$ as $(X\tau(\wt u), \tau(\wt v))$. In particular, this implies $\wt{n}_{E'} = \wt{n}_E$ and $\wt{\cX}_{E'} = \wt{\cX}_E$.
 \end{itemize}
Finally, since the above case distinction is exhaustive, and since the transition takes place within the strongly connected component~$\cC$, we can rule out the first two cases.
\end{proof}

Since $\wt{\cX}_{E} = \wt{\cX}_{E'}$ for all states $E, E'$ in $\cC$ by \prref{lem:playground}, we can now talk about the players of the strongly connected component~$\cC$, the set of which we denote by $\wt{\cX}_\cC$.
The lemma also states that, within~$\cC$, we may restrict our analysis to the respective playgrounds and treat all non-players as if they were constants.
Be aware however that, even though the set of players~$\wt{\cX}_\cC$ depends only on $\cC$, the playground itself and the arrangement of players within it may vary between states.
For an equation $E = (U = V,\cX,\mu) \in \cC$ with playground $(\wt{U}, \wt{V})$, we split the set of players $\wt{\cX}_E$ into a disjoint union of \emph{balanced} and \emph{unbalanced} players, where a player $X \in \wt{\cX}_E$ is \emph{balanced} if $|\wt{U}|_X = |\wt{V}|_X = 1$, and \emph{unbalanced} if either $|\wt{U}|_X = 2$ or $|\wt{V}|_X = 2$.
As for the playground itself, the notion of balanced and unbalanced players depends on the particular equation $E$ within the strongly connected component $\cC$.

\begin{lemma} \label{lem:cycle}
  Every cycle in the strongly connected component~$\cC$ contains an equation which is nicely balanced with regard to some variable. 
\end{lemma}
\begin{proof}
We begin with a crucial observation:
If one side of an equation $E \in \cC$ begins with a balanced player $X \in \cX$, then~$E$ is nicely balanced with regard to~$X$.
Indeed, by definition of a player we then have $\abs{L_{\mu, X}} = \infty$ and $J_\cC = \cJ(\mu(X))$; in particular, $S_\cC = \stab_{\cL}(\mu(X))$.
Moreover, writing $U = Xu$ and $V = vXv'$, we note that $vX$ is a prefix of $\wt{V}$ and, hence, every symbol~$\alpha \in \Omega$ occurring in $v$ satisfies $\mu(\alpha) \in \stab_{\cL}(\mu(X))$.
This implies $\mu(v) \in \stab_{\cL}(\mu(X))$ by the first item in \prref{def:DLRG}.
Hence, $E$ and $X$ satisfy the requirements of \prref{def:nice}.

Now take some transition~$E \rightarrow E'$ on a cycle within $\cC$.
By item~$(1)$ of \prref{lem:sntrans}, this transition is then labeled by a substitution $\tau$ defined by $X \mapsto \alpha X$ for some $X \in \cX$ and~$\alpha \in \Omega$.
We first note that if~$X$ does not occur in~$E$, then~$E$ is nicely balanced with regard to $X$ since~$\abs{L_{\mu, X}} = \infty$ by item~$(4)$ of the same lemma.
Otherwise, by \prref{lem:playground}, the variable~$X$ is a player and we may assume it to be unbalanced by the observation in the first paragraph.

	Following the substitution~$\tau$, the side not containing~$X$ decreases in length, which can only happen finitely many times.
  Thus, following the cycle, we must at some point encounter a transition~$F \rightarrow F'$ labeled~$Y \mapsto XY$; here~$Y \in \cX \sm \os{X}$ is another variable and moreover, by the same arguments, an unbalanced player.
  This substitution makes~$X$ balanced in~$F'$, so, in order to return to our starting point~$E$, it must become unbalanced once again.
  However, the only way for a balanced player to become unbalanced, is if it occurs at the beginning of one of the sides as a balanced player. 
  This completes the proof.
\end{proof}

Letting now $E$ be any quadratic equation with infinitely many solutions, \prref{prop:matiya} tells us that $\cA(E)$ contains some nontrivial strongly connected component.
Using \prref{lem:cycle}, we can then reduce $E$ to a solvable, nicely balanced equation.
Using \prref{lem:nice}, we finally deduce our main result, which is stated as \prref{thm:main} in the introduction.

\begin{theorem}\label{thm:mainiac}
  Let $(U = V, \mu)$ be a quadratic equation with regular constraints in a semigroup from \pv{DLG}.
  Then $(U = V, \mu)$ has infinitely many solutions if and only if $\exp(U = V, \mu) = \infty$.
\end{theorem}

\subsection{\texorpdfstring{The Brandt Semigroup $B_2$}{The Brandt Semigroup B2}}\label{sec:brandt}

The \emph{(combinatorial) Brandt semigroup} $B_2$ is the semigroup $\os{a,b,ab,ba,0}$ with multiplication given by the equations $\os{aba = a, bab = b, a^2 = b^2 = 0}$.
It does not belong to the class \pv{DLG} because we have $ba\siL a$ but $b^\oo a =0\neq a$, so that property $(3)$ in \prref{def:DLRG} is violated.\footnote{Note that $ba\siL a$ implies that the $\cJ$-class of $a$ contains $ba$. Since $ba$ is an idempotent, the $\cJ$-class of $a$ is regular, but not a subsemigroup as $a^2 = 0 \not\sim_{\cJ} a$.
This means that $B_2$ is not contained in the variety~$\pv{DS}$ consisting of all finite semigroups where regular $\cD$-classes are semigroups.
In fact, it holds that a finite semigroup $S$ belongs to $\pv{DS}$ if and only if $B_2$ does not divide $S \times S$ \cite[Exercise 8.1.6]{alm94}.}
However, we can still show the following.

\begin{proposition}\label{prop:brandt}
  The Brandt semigroup $B_2$ is nice for the family of quadratic word equations with at most two constants.
\end{proposition}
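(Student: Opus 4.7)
The plan is to reduce the claim to a single essential configuration of the constraint $\mu(a),\mu(b)\in B_2$ and then split into two cases based on the leading $\cJ$-class of the strongly connected component of the Matiyasevich solution graph.

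A short case analysis of the $5\times 5$ possible pairs $(\mu(a),\mu(b))$ shows that the generated subsemigroup $\mu(\Sigma^+)\sse B_2$ is $\cJ$-trivial---and hence lies in $\pv{DLG}$---in every case except $\{\mu(a),\mu(b)\}=\{a,b\}$, which gives $\mu(\Sigma^+)=B_2$. In the $\cJ$-trivial cases \refcor{cor:main} applies directly, so (using the automorphism of $B_2$ that swaps $a$ and $b$) it suffices to handle $\mu(a)=a$ and $\mu(b)=b$. For this configuration, $\mu$ is the natural projection $\{a,b\}^+\to B_2$; the preimage of each nonzero element of $B_2$ consists of alternating words (for instance $\mu^{-1}(a)=a(ba)^\ast$), and any alternating word of length~$n$ has exponent of periodicity at least $\lfloor n/2\rfloor$.

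Given infinitely many solutions of $(U=V,\mu)$, \prref{prop:matiya} yields a nontrivial strongly connected component $\cC$ of $\cA(E)$ with a well-defined leading $\cJ$-class $J_\cC$ by \prref{lem:snleader} (whose proof uses only general properties of the solution graph and not the $\pv{DLG}$ hypotheses). I would split on $J_\cC$. If $J_\cC$ is the nonzero regular $\cJ$-class of $B_2$, every variable $X$ modified along a cycle of $\cC$ satisfies $\mu(X)\in J_\cC$, so the pumped solutions $\sigma^{(n)}(X)$ remain in the alternating language $\mu^{-1}(\mu(X))$ and, having length growing with~$n$ because the cycle substitution $\rho$ is length-increasing on~$X$, acquire exponent of periodicity that diverges. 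If $J_\cC=\{0\}$, then $S_\cC=\stab_\cL(0)=B_2^1$, and the restrictions of properties~(2) and~(3) of \refdef{def:DLRG} to the singleton $\cJ$-class $\{0\}$ hold trivially, so the playgrounds-and-players analysis of \prref{lem:playground} and \prref{thm:cycle} goes through verbatim, and \prref{lem:nice} concludes $\exp(U=V,\mu)=\infty$.

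The main obstacle is that, for a nonzero regular $J_\cC$ in $B_2$, the leading $\cL$-stabilizer $S_\cC$ is not well-defined, since $\stab_\cL$ depends on the $\cR$-class within that $\cJ$-class and the $\cR$-class of $\mu(X)$ can change across transitions inside a cycle. This is the reason the playgrounds-and-players machinery of \refsec{sec:sobeauty} cannot be imported verbatim in the first branch, and why the alternative direct argument via alternating-word growth is needed in its place.
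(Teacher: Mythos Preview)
Your approach is essentially correct and takes a genuinely different route from the paper's own proof. Both arguments share the initial reduction to $\mu(a)=a$, $\mu(b)=b$ (your observation that every proper subsemigroup of $B_2$ is even $\cJ$-trivial is correct and slightly sharper than what the paper states). After that the paper proceeds by first disposing of variables with $\mu(X)\neq 0$ via the alternating-word observation, and then, for the remaining variables with $\mu(X)=0$, guessing for each $X$ a constant $c_X\in\{a,b\}$ such that $c_X^2$ occurs as a factor of $\sigma(X)$ and applying the substitution $X\mapsto X_1 c_X^2 X_2$ with fresh variables. This kills the constraints entirely and reduces to \prref{prop:trivial}. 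Your proof instead stays inside the solution graph and splits on $J_\cC$: for $J_\cC=\{0\}$ you correctly observe that the \pv{DLG} hypotheses of \prref{def:DLRG} are only ever invoked for elements of $J_\cC$, and they hold trivially at $0$, so the playground machinery runs unchanged; for nonzero $J_\cC$ you use the alternating-word growth directly. The paper's reduction is more elementary and self-contained; your argument is more structural and makes transparent \emph{why} $B_2$ only barely fails \pv{DLG}.

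One small gap to patch in the nonzero-$J_\cC$ branch: your claim that \emph{every} variable modified along a cycle satisfies $\mu(X)\in J_\cC$ is not justified for variables $X$ that do not occur in $UV$; such a variable can have $\mu(X)=0$ even when $J_\cC$ is the nonzero class. You should first dispose of cycles that only touch such variables (any state on such a cycle is nicely balanced in the sense of \prref{def:nice}, first bullet, so \prref{lem:nice} applies), and then note that any remaining cycle contains a transition modifying a \emph{leading} variable $X$, for which $\cJ(\mu(X))=J_\cC$ does follow from the proof of \prref{lem:snleader}. For that single $X$ your pumping-plus-alternating argument goes through. Also, in the $J_\cC=\{0\}$ branch you will need property~(1) of \prref{def:DLRG} as well (not just (2) and (3)), since the conclusion that a state with a leading balanced player is nicely balanced uses that $\stab_\cL(0)=B_2^1$ is a submonoid; this is of course immediate.
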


Our proof of \prref{prop:brandt} relies on the following lemma, which itself can be easily verified by routine calculations.
We omit the details of these calculations.

\begin{lemma}\label{lem:brandt-ssg}
  Every proper subsemigroup of $B_2$ belongs to $\pv{DLG}$.
\end{lemma}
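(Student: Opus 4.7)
The plan is a direct case analysis, exploiting the fact that $B_2$ has only five elements and that the map swapping $a$ with $b$ is an automorphism of $B_2$, which halves the number of isomorphism types to consider.

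First I would enumerate the proper subsemigroups. Closure considerations rule out many potential subsets: $a^2 = b^2 = 0$ forces any non-singleton subsemigroup containing $a$ or $b$ to contain $0$, and $ab \cdot ba = ba \cdot ab = 0$ does the same whenever both $ab$ and $ba$ are present. Up to the $a \leftrightarrow b$ automorphism, the remaining proper subsemigroups are: the empty semigroup; the singletons $\{0\}$ and $\{ab\}$; the two-element subsemigroups $\{0,a\}$ (null) and $\{0,ab\}$ (semilattice); the three-element subsemigroups $\{0,ab,ba\}$, $\{0,a,ab\}$, and $\{0,a,ba\}$; and the four-element subsemigroup $\{0,a,ab,ba\}$.

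For each representative I would compute the $\cJ$-class structure and observe that every regular $\cJ$-class collapses to a singleton, which is automatically a right group (in fact a trivial group). In the largest case $\{0,a,ab,ba\}$, for instance, one computes $S^1 a S^1 = \{0,a\}$, $S^1(ab)S^1 = \{0,a,ab\}$, and $S^1(ba) S^1 = \{0,a,ba\}$, so that $a$, $ab$, and $ba$ each form their own $\cJ$-class, with regular classes $\{0\}$, $\{ab\}$, $\{ba\}$. Once this is known, verifying the three conditions of \prref{def:DLRG} reduces to a handful of elementary checks, since $\sim_\cJ$ and $\sim_\cL$ restricted to any regular element become trivial, so that $\stab_\cL(x) = \{u \in S^1 : ux = x\}$ is clearly a submonoid, and the hypothesis $ux \sim_\cL x$ collapses to $ux = x$, from which $u^\omega x = x$ is immediate.

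The structural reason this works is that $B_2$ fails to lie in $\pv{DLG}$ only because its regular $\cJ$-class $\{a,b,ab,ba\}$ is not closed under multiplication. Removing any single element from this class severs the $\sim_\cJ$-links that had connected the remaining elements (those links depended on products such as $a \cdot b = ab$ or $ab \cdot a = a$, which require the full four-element class to be present), so every regular $\cJ$-class of any proper subsemigroup shrinks to a singleton. The only real obstacle is the combinatorial bookkeeping, which remains entirely manageable thanks to the small size of $B_2$ and the $a \leftrightarrow b$ symmetry; no conceptual difficulty arises.
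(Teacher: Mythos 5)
Your proposal is correct and follows exactly the route the paper has in mind: the paper explicitly omits the proof as ``routine calculations,'' and your enumeration of the proper subsemigroups of $B_2$ up to the $a\leftrightarrow b$ automorphism, together with the observation that all their $\cJ$-classes (hence all regular ones) are singletons, supplies precisely those calculations. One tiny imprecision: what you need (and what your computations actually show) is that $\sim_\cJ$ and $\sim_\cL$ are trivial on \emph{all} elements, not just the regular ones, since conditions $(1)$--$(3)$ of \prref{def:DLRG} quantify over every $x\in S$; this does not affect the validity of the argument.
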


\begin{proof}[Proof of \prref{prop:brandt}]
  Let $E = (U = V, \mu)$ be a quadratic equation with $\mu: \Omega^+ \to B_2$ over a set of constants~$\Sig$ with $\abs{\Sig} \leq 2$.
  If $\mu(\Sig)$ generates a proper subsemigroup, then the result follows from \prref{lem:brandt-ssg} together with \prref{thm:mainiac}.
  Hence, we may assume that $\mu(\Sig)$ generates $B_2$.
  Up to symmetry this leaves only one possibility, namely that $\Sig = \os{a, b}$ gets mapped to $B_2$ as $\mu(a) = a$ and $\mu(b) = b$.
  Accordingly, we have
  \[
    \oi\mu(a) = a(ba)^\ast,\quad
    \oi\mu(b) = b(ab)^\ast,\quad
    \oi\mu(ab) = (ab)^{+},\quad
    \oi\mu(ba) = (ba)^{+}.
  \]
  Since arbitrarily long words in these languages clearly have arbitrarily high exponent of periodicity, we may assume that $\mu(X) \notin \os{a,b,ab,ba}$ and, hence, $\mu(X) = 0$ for all $X \in \cX$.

  On the other hand, we have $\oi\mu(0) = \Sig^\ast\os{a^2, b^2}\Sig^\ast$.
  Phrased differently, if $\sig \in \Sol(E)$, then $\sig(X)$ contains a factor $c_X^2$ with $c_X \in \Sig$ for every variable $X \in \cX$.
  We can guess these constants $c_X$ for all variables $X \in \cX$ simultaneously in the form of a mapping $c: \cX \to \Sig$.
  Applying the substitution $X \mapsto \tau_c(X) = X_1 c_X^2 X_2$ with fresh variables $X_1, X_2$ for all $X \in \cX$, we arrive at a quadratic equation $E_c = (\tau_c(U) = \tau_c(V))$ \emph{without} constraints.

  Allowing for singular solutions of the $E_c$, this \nondet procedure is both sound and complete: every solution $\sig'$ of any $E_c$ gives rise to a solution $\sig = \sig'\tau_c$ of $E$, and every solution $\sig$ of $E$ factorizes as $\sig = \sig' \tau_c$ for some solution $\sig'$ of $E_c$ for some suitable $c: \cX \to \Sig$.

  By the pigeonhole principle, if $E$ has infinitely many solutions, then so does some $E_c$.
  Hence, $\exp(E) \geq \exp(E_c) = \infty$ holds by \prref{prop:trivial}.
\end{proof}

For $\Sigma = \{a, b, c\}$ the prove fails to handle mappings like $\mu(a) = a, \mu(b) = b$, and $\mu(c) = ab$.
These images generate the entire semigroup $B_2$, so we are not able to reduce to a proper subsemigroup.
At the same time, $\oi\mu(ab) = \os{ab, c}^{+}$ contains arbitrarily long words with bounded exponent of periodicity, so we cannot hope to eliminate those variables $X \in \cX$ which satisfy $\mu(X) = ab$ as we did in the proof of \prref{prop:brandt}.

\section{\texorpdfstring{The Variety \pv{DLG}}{The Variety DLG}}\label{sec:ddlg}

Our main result, \prref{thm:mainiac}, concerns semigroups from a certain subvariety of~$\pv{DS}$, which itself consists of all finite semigroups whose regular $\cD$-classes are semigroups, and which was first considered in Sch\"utzenberger's article `\emph{Sur le produit de concat{\'e}nation ambigu}' \cite{sch76}.
The subvariety in question, which is also the subject of this section, is the variety $\pv{DLG}$ comprising all finite semigroups whose regular $\cD$-classes are right groups.
Throughout this section, we adopt this as a definition of $\pv{DLG}$, and show in \prref{sec:lst-dlg} that this definition agrees with the earlier \prref{def:DLRG}~--~that is, both define the same class of finite semigroups.

Before we examine this variety in more detail in this section, let us remark that $\pv{DLG}$ (and indeed any variety of the form $\pv{DV}$) has certain useful closure properties:
It is closed under adjunction of a neutral element (hence, the distinction between treating it as a variety of finite semigroups or as one of monoids is marginal) and, adjunction of a zero element.
Due to the latter, we can reduce any system of word equations with constraints in \pv{DLG} to a single such equation as described in \prref{sec:single}.
The variety is also closed under forming nilpotent extensions, so that nothing is to be gained from combining \prref{thm:ell} and \prref{thm:mainiac}.

\subsection{Right Groups}\label{sec:snsdl}

A finite semigroup~$S$ is called a \emph{right group} if it satisfies the identity $x = y^\omega x$. 
If a monoid~$N$ is a right group, then~$N$ is a group.
However, there are infinitely many finite semigroups satisfying $x = y^\oo x$ without being groups. 
To see this, let $G$ be a finite group and $I$ be a nonempty finite set.
Then the set $S = G \times I$ becomes a right group when endowed with the multiplication defined by $(g, i) \cdot (h, j) = (gh, j)$, but $S$ is a group only if $I$ is a singleton.

It is a well-known fact, which however will not be used explicitly here, that every right group is of the above form \cite[Thm.~1.27]{clifford61}.
Instead, we content ourselves with the following.

\begin{lemma}\label{lem:snRss}
A finite semigroup $S$ is a right group if and only if it is right simple, meaning that $S$ has no proper right ideals or, equivalently, that all elements of $S$ are $\cR$-equivalent.
\end{lemma}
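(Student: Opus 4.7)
The plan is to establish two equivalences: first, that ``having no proper right ideals'' coincides with ``all elements being $\cR$-equivalent'', and second, that both coincide with the right group identity $x = y^\oo x$. The first equivalence requires no finiteness and follows by noting that the right ideal generated by $x$ is exactly $xS^1$: absence of proper right ideals forces $xS^1 = S$ for every $x \in S$, whence all elements are $\cR$-equivalent; conversely, any nonempty right ideal $I$ contains some $x$ and hence $xS^1$, which equals all of $S$ by $\cR$-equivalence. The empty semigroup satisfies everything vacuously.

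For the harder equivalence, I would use finiteness to pick, for each $y \in S$, an exponent $\oo \geq 1$ such that $e = y^\oo$ is idempotent. For the direction right group $\implies$ right simple, fix $x, z \in S$ and apply the identity to get $z = x^\oo z = ez$ with $e = x^\oo$; if $\oo \geq 2$ this exhibits $z$ as $x \cdot (x^{\oo-1} z) \in xS$, and if $\oo = 1$ then $x = e$ is already idempotent and the identity reads $z = xz \in xS$. Either way $xS^1 = S$. For the converse, fix $x, y \in S$ and set $e = y^\oo$; right simplicity of $S$ yields $x \in eS^1$, so either $x = e$ or $x = es$ for some $s \in S$, and in both cases $y^\oo x = ex = x$ follows from $e^2 = e$.

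There is no real obstacle here: the argument is a routine exercise in basic semigroup theory, and finiteness enters only to guarantee existence of the idempotent power $y^\oo$. The only point demanding mild attention is the edge case $\oo = 1$, where $x^{\oo-1}$ no longer lives in $S$ and one must appeal directly to idempotence of $x$; switching between $S$ and $S^1$ in the formulas is likewise a matter of bookkeeping rather than substance.
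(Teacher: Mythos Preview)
Your proof is correct and follows essentially the same approach as the paper's. The paper's argument is slightly terser: for the forward direction it just notes $y \geq_\cR y^\oo x = x$ for all $x,y$, and for the converse it writes $x = y^\oo z$ with $z \in S^1$ and concludes $y^\oo x = y^\oo y^\oo z = x$; by working with $S^1$ and the $\cR$-preorder throughout, the paper sidesteps the $\oo = 1$ edge case you single out, but the underlying idea is identical.
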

\begin{proof}
  If $S$ is a right group, then $y \geq_\cR y^\omega x = x$ for all $x,y \in S$ and, hence, $x \geR y$, too.

  To see the converse, consider $x,y \in S$ and note that $x \sim_\cR y^\omega$ by assumption.
  We may thus write $x = y^\omega z$ for some $z \in S^1$ and, therefore, $x = y^\oo z = y^\oo y^\oo z = y^\oo x$.
\end{proof}

Right groups are regular semigroups, meaning that every element of a right group is regular (since it is contained in a subgroup).
Moreover, the idempotent elements of a right group always form a subsemigroup and, in fact, a right zero semigroup (as witnessed by the identity $y^\oo x^\oo = x^\oo$).
Therefore, right groups are examples of \emph{orthodox} semigroups, which are precisely those regular semigroups in which the idempotents form a subsemigroup.

\subsection{Other Characterizations}\label{sec:Membership}

We now turn to the variety $\pv{DLG}$.
As mentioned in the introduction to this section, it comprises all finite semigroups whose regular $\cD$-classes are right groups.
In particular, $\pv{DLG}$ is contained in the variety $\pv{DO}$ consisting of all finite semigroups whose regular $\cD$-classes are orthodox semigroups. 
The latter is, in turn, contained in the variety $\pv{DS}$ consisting of all finite semigroups whose regular $\cD$-classes are semigroups.
Thus, $\pv{DLG} \sse \pv{DO} \sse \pv{DS}$.

Even though the variety of interest is defined in terms of right groups, its name $\pv{DLG}$ is more aptly explained by the following characterization.\footnote{\prref{prop:Prop749edam} permits reading the second statement as `\emph{in each regular $\cD$-class, each $\cL$-class is a group}', thereby giving rise to a memo for the notation: \textbf{DLG}${}={}$\textbf{D}-class \textbf{L}-class \textbf{G}roup.}

\begin{lemma}\label{lem:memo}
The following two assertions are equivalent for a finite semigroup~$S$. 
\begin{bracketenumerate}
\item Every regular $\cD$-class of $S$ is a right group.
\item Every regular $\cL$-class of $S$ is a group.
 \end{bracketenumerate}
\end{lemma}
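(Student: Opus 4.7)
The plan is to handle the two implications separately. The direction $(1) \Rightarrow (2)$ is essentially immediate from what has been developed; the real work lies in the reverse direction, where I will exploit the group inverses provided by the hypothesis.

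For $(1) \Rightarrow (2)$, I would let $L$ be a regular $\cL$-class, contained in some $\cD$-class $D$. Then $D$ is regular by \prref{prop:Prop749edam}, hence a right group by hypothesis, and therefore a single $\cR$-class by \reflem{lem:snRss}. Consequently, for any $x \in L$ we have $L = \cL(x) \cap D = \cL(x) \cap \cR(x) = \cH(x)$. Since $L$ is regular it contains an idempotent, and the standard fact that an $\cH$-class containing an idempotent is a group then yields the conclusion.

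For the converse, let $D$ be a regular $\cD$-class. I would show, in turn, that (a) all elements of $D$ are $\cR$-equivalent, (b) $D$ is a subsemigroup, and (c) $D$ satisfies the right group identity $y^{\oo}x = x$. For (a), given $x,y \in D$, choose $z$ with $x \sim_\cL z \sim_\cR y$ (using $\cD = \cL \circ \cR$); the regular $\cL$-class $\cL(x) = \cL(z)$ is a group by hypothesis, so $x \sim_\cR z$, and therefore $x \sim_\cR y$. For (c), one uses that $y^{\oo}$ lies in $D$ (since $y^{\oo} \sim_\cJ y$ in any finite semigroup); by (a) the idempotent $y^{\oo}$ is $\cR$-equivalent to $x$, and the standard fact that an idempotent $\cR$-related to $a$ acts as a left identity on $a$ gives $y^{\oo} x = x$.

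The main obstacle is (b). Given $x,y \in D$, the $\cL$-class of $x$ is a group $\cH$-class with identity $e_x$, so $x$ admits an inverse $x^{-1}$ in this group with $x^{-1} x = e_x$. Since $e_x \in D$, step (a) gives $e_x \sim_\cR y$, and hence $e_x y = y$ by the same standard fact used above. Therefore $x^{-1}(xy) = e_x y = y$, which shows $y \leq_\cL xy$; combined with the trivial $xy \leq_\cL y$, this gives $xy \sim_\cL y$, so $xy \in D$. It is precisely the group structure on each regular $\cL$-class that supplies the inversion needed to close $D$ under multiplication.
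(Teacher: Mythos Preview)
Your proof is correct in structure and follows essentially the same route as the paper for both directions. One slip, however: in step (c) you justify $y^{\oo} \in D$ by the parenthetical ``since $y^{\oo} \sim_{\cJ} y$ in any finite semigroup'', and this is false in general (take any nontrivial nilpotent semigroup). The conclusion $y^{\oo} \in D$ is nevertheless correct in your setting: either invoke your own step (b) --- once $D$ is closed under multiplication, $y^{n} \in D$ for all $n \geq 1$, hence $y^{\oo} \in D$ --- or observe directly that $\cL(y) = \cH(y)$ is a group by hypothesis, so $y^{\oo}$ is its identity and lies in $\cH(y) \subseteq D$.

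For the closure step (b), the paper takes a slightly different path: it argues $xy \sim_{\cR} x^{2} \sim_{\cL} x$, the first equivalence because $x \sim_{\cR} y$ and the second because $\cL(x)$ is a group containing $x^{2}$. Your use of the group inverse $x^{-1}$ to obtain $xy \sim_{\cL} y$ is equally valid and makes the role of the group structure more explicit; the paper's version simply avoids naming inverses.
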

\begin{proof}
  \textit{Ad} $(1) \Rightarrow (2)$. 
  Let $\cL(e)$ be a regular $\cL$-class.
  Then $\cD(e)$ is regular and thus a right group by assumption.
  By \prref{lem:snRss}, this implies that all elements of $\cD(e)$ are $\cR$-equivalent.
  But then $\cL(e)$ is a regular $\cH$-class and, in particular, a group.

  \textit{Ad }$(2) \Rightarrow (1)$. 
  Let $\cD(e)$ be a regular $\cD$-class.
  Then $\cL(x)$ is regular for every $x \in \cD(e)$ by \prref{prop:Prop749edam}.
  Hence, by assumption, each such $\cL(x)$ is a group and thus an $\cH$-class.
  This implies $\cD(e) = \cR(e)$.
  To see that $\cD(e)$ is closed under multiplication, note that $xy \sim_\cR x^2 \sim_\cL x$ for $x, y \in \cD(e)$.
  The first equivalence follows since $x \sim_\cR y$ and the second, since $\cL(x)$ is a group.
  In particular, we have $x \sim_\cR y^\omega$. 
  Therefore we may write $x=y^\omega z$ for some $z \in S^1$, which implies $y^\omega x=x$.
  Hence, $\cD(e)$ is a right group.
\end{proof}

The variety \pv{DLG}, along with its left-right dual \pv{DRG}, has been studied extensively by Almeida and Weil~\cite{AlmeidaWeil1997IJAC}.
Various other results on \pv{DLG} and its generalizations can be found in C\'{e}lia Borlido's thesis~\cite{BorlidoPHD2016} and the articles \cite{AlmeidaBorlidoIJAC2017,BorlidoTCS2017,BorlidoComAlg2018}. 
The following characterization of \pv{DLG} and its proof were kindly provided by Jorge Almeida (personal communication).

\begin{lemma}\label{lem:dlgth}
For every finite semigroup $S$, the following assertions are equivalent. 
\begin{bracketenumerate}
 \item The semigroup $S$ satisfies the identity $(xy)^\oo= y^\oo (xy)^\oo$.
  \item The semigroup $S$ satisfies the identity $(xyz)^\omega = y^\omega (xyz)^\omega$.
  \item The semigroup $S$ satisfies the identity $(xy)^{\oo} = (yx)^{\oo} (xy)^{\oo}$.
  \item Every regular $\cD$-class of $S$ is a right group.
\end{bracketenumerate}
\end{lemma}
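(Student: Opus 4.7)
The plan is to establish the equivalence through the cycle $(1) \Rightarrow (4) \Rightarrow (2) \Rightarrow (3) \Rightarrow (4)$ together with $(4) \Rightarrow (1)$. Both $(1) \Rightarrow (4)$ and $(3) \Rightarrow (4)$ rest on a common argument: given two idempotents $e, f$ in a single regular $\cJ$-class, I would write $e = pfq$ so that $e = e^\omega = (pfq)^\omega$, and apply (1) or (3) with $x \leftarrow p$ and $y \leftarrow fq$ to obtain $e = (fq)^\omega e$, respectively $e = (fqp)^\omega e$. In either case the idempotent on the right-hand side has $f$ as a prefix and is therefore left-fixed by $f$, so multiplying the equation on the left by $f$ yields $fe = e$. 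By the left-right symmetric substitution $ef = f$, whence $e \sim_\cR f$. Combined with \prref{prop:Prop749edam}, which guarantees that every element of the regular $\cJ$-class is $\cR$-equivalent to some idempotent in that class, this forces each regular $\cD$-class to consist of a single $\cR$-class. Within such a class, two distinct idempotents cannot share an $\cH$-class (uniqueness of the identity in a group), so each $\cL$-class contains at most one idempotent and is therefore an $\cH$-class, hence a group, giving (4) by \prref{lem:memo}. The implication $(2) \Rightarrow (3)$ is an immediate substitution: replacing $(x,y,z)$ by $(x, yx, y)$ in (2) gives $(xyxy)^\omega = (yx)^\omega (xyxy)^\omega$, which collapses to (3) via $(xyxy)^\omega = (xy)^{2\omega} = (xy)^\omega$.

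The main obstacle is $(4) \Rightarrow (1)$; a small variation will then also deliver $(4) \Rightarrow (2)$. Fix $e = (xy)^\omega$ and let $D = \cD(e)$, which is a right group by (4). The crucial step is to show $ye \sim_\cL e$. The containment $S^1(ye) \sse S^1 e$ is immediate; for the converse, I would set $s := (xy)^{\omega-1} x$, so that $sy = (xy)^\omega = e$ and hence $s \cdot ye = (sy) e = e \cdot e = e \in S^1 (ye)$. Since $\cL$ and $\cH$ coincide on the right group $D$ by \prref{lem:memo}, $ye$ lies in the group $\cH$-class $H_e$ of $e$, whose identity is $e$. In particular $e \cdot ye = ye$, which by a straightforward induction yields $(ye)^n = y^n e$ for all $n \geq 1$. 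Evaluating at $n = \omega$ gives $y^\omega e = (ye)^\omega$, and this coincides with the identity of the finite group $H_e$, namely $e$.

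The implication $(4) \Rightarrow (2)$ is a direct variant. Set $e = (xyz)^\omega$ and $f = (zxy)^\omega$; both are idempotents in the common right group $D$ by (4), so $e \sim_\cR f$ and we may write $e = fu$ for some $u \in S^1$. Running the argument above with $f$ in place of $e$ and $(zxy)^{\omega-1}zx$ in place of $(xy)^{\omega-1} x$ gives $yf \sim_\cL f$, whence $y^\omega f = (yf)^\omega = f$. Combining these yields $y^\omega e = y^\omega f u = fu = e$, as required.
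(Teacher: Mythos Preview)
Your proof is correct and takes a genuinely different route from the paper's. The paper cycles through the identities first, proving $(1)\Rightarrow(2)\Rightarrow(3)\Rightarrow(1)$ by identity manipulation (the hard step $(3)\Rightarrow(1)$ invokes membership in $\pv{DS}$ and an external lemma from Almeida's book \cite[Lemma~8.1.4]{alm94}), and only then links $(3)\Leftrightarrow(4)$. You instead make $(4)$ the hub: your argument that any two idempotents $e,f$ in a regular $\cJ$-class satisfy $fe=e$ (hence are $\cR$-equivalent) under either $(1)$ or $(3)$ is clean and, combined with \prref{prop:Prop749edam}, forces the regular $\cD$-class to be a single $\cR$-class; your $(4)\Rightarrow(1)$ via $ye\sim_\cL e$, $ye\in H_e$, and $(ye)^n=y^ne$ is likewise direct. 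The payoff is a fully self-contained proof with no outside citation, at the cost of a slightly longer path. Two small expository remarks: when you write $e=pfq$ with $p,q\in S^1$ and then substitute into identities that are \emph{a priori} only for elements of $S$, the degenerate cases $p=1$ or $q=1$ should be acknowledged (they are immediate, since then $f$ is literally a left factor of $e$); and your sentence ``each $\cL$-class contains at most one idempotent and is therefore an $\cH$-class'' is phrased as a non-sequitur --- the correct reason is simply that once the regular $\cD$-class is a single $\cR$-class, every $\cL$-class inside it coincides with an $\cH$-class.
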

\begin{proof}
  To see the implication $(1) \Rightarrow (2)$, let us first note that $(1)$ implies the identity
  \[
  (xy)^{\oo} = x (yx)^{\oo} y(xy)^{\oo-1}
   = x x^{\oo}(yx)^{\oo} y(xy)^{\oo-1}
	 = x^{\oo} x(yx)^{\oo} y(xy)^{\oo-1}
	 = x^{\oo} (xy)^{\oo}.
  \]
  Using this, we obtain $(xyz)^{\oo} = (yz)^{\oo} (xyz)^{\oo} = y^{\oo} (yz)^{\oo} (xyz)^{\oo} =y^{\oo} (xyz)^{\oo}$.

  Assuming $(2)$, the semigroup $S$ satisfies $(xy)^{\oo} = (xyxy)^{\oo} = (yx)^{\oo} (xyxy)^{\oo} = (yx)^{\oo} (xy)^{\oo}$, which shows that $(2) \Rightarrow (3)$.
  To see $(3) \Rightarrow (1)$, first note that $(xy)^{\oo} = (yx)^{\oo} (xy)^{\oo}$ implies the defining identity $(xy)^{\oo} = ((xy)^{\oo}(yx)^{\oo}(xy)^{\oo})^{\oo}$ for the variety \pv{DS} \cite[Sec.~8.1]{alm94}.
  Now, for all $x,y \in S$, the idempotents $e = (y^{\oo} (xy)^{\oo})^{\oo}$ and $f= (xy)^{\oo}$ satisfy $e\sim_{\cJ} f$ by \cite[Lemma~8.1.4]{alm94}.
  Indeed, we have $f \leJ y^n$ for all $n \geq 1$ since this implies $f \siJ fy^n \leJ y^{n+1}$ by the cited lemma and the assertion clearly holds for $n = 1$.
  So $f \leJ y^{\oo}$ and, therefore, $f \siJ y^{\oo} f \siJ (y^\oo f)^\oo = e$.

  Moreover, since $e \leq_{\cL} f$, this shows $e \sim_{\cL} f$ so that $e = ef$ and $f = fe$.
  We then deduce that $f = fe = (ef)(fe) = ef = e$ using $(xy)^{\oo} = (yx)^{\oo} (xy)^{\oo}$ for the second equality.
  Hence, it follows that $(xy)^{\oo} = (y^{\oo} (xy)^{\oo})^{\oo} = y^{\oo} (y^{\oo} (xy)^{\oo})^{\oo} = y^{\oo} (xy)^{\oo}$ as required.

  Still assuming $(3)$, we know from the above that $S \in \pv{DS}$.
  Hence, if $x,y \in S$ are elements of the same regular $\cD$-class, then $xy \siJ yx \siJ x$.
  Thus, $x = uyxv = (uy)^{\oo} x v^{\oo}$ for some $u,v \in S^1$.
  Applying the identity from $(1)$ yields $x = (uy)^{\oo} x v^{\oo} = y^{\oo} (uy)^{\oo} x v^{\oo} = y^{\oo} x$, which is a defining identity for right groups; hence, $(3) \Rightarrow (4)$.
  Finally, to see that $(4) \Rightarrow (3)$, note that $(xy)^{\oo}$ and $(yx)^{\oo}$ are always contained in the same (necessarily regular) $\cD$-class. 
  Due to the identity $x = y^{\oo}x$ for right groups, this implies the equation $(xy)^{\oo} = (yx)^{\oo} (xy)^{\oo}$.
\end{proof}

\subsection{\texorpdfstring{Properties of $\cL$-Stabilizers}{Properties of L-Stabilizers}}\label{sec:lst-dlg}

We now turn to the $\cL$-stabilizers discussed in \prref{sec:lst}, which are particularly well-behaved for finite semigroups belonging to $\pv{DLG}$.
Indeed, the following results show that the definition of the class \pv{DLG} used throughout this section is consistent with \prref{def:DLRG}.

\begin{lemma}\label{lem:lst2}
  Let $S \in \pv{DLG}$.
  Then the following hold for all $x,y \in S$ and $u,v \in S^1$.
  \begin{bracketenumerate}
    \item If $u^{\oo} x = x$ and $v^{\oo} x = x$, then $(uv)^{\oo} x = x$; that is, $\stab_{\cL}(x)$ is a submonoid of $S^1$.
    \item If $x \siJ y$, then $u^{\oo} x = x$ if and only if $u^{\oo}y = y$; that is, $x \siJ y \Rightarrow \stab_{\cL}(x) = \stab_{\cL}(y)$.
  \end{bracketenumerate}
  Conversely, if the second item is satisfied for some finite semigroup $S$, then $S \in \pv{DLG}$.
\end{lemma}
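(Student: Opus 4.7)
The plan is to prove parts (1) and (2) of the forward direction together via a case distinction on whether $\cJ(x)$ is regular, and to derive the converse by contrapositive.

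I would first consider the \emph{regular case}, in which $\cJ(x)$ is a regular $\cD$-class of $S$. By $S \in \pv{DLG}$ and \prref{lem:memo}, the $\cL$-class $\cL(x)$ coincides with the $\cH$-class $\cH(x)$, which is a finite group; let $e$ denote its idempotent identity. The key observation I would use is that $u^\oo x = x$ if and only if $u^\oo e = e$: the forward implication follows by multiplying on the right with the inverse of $x$ inside $\cH(x)$, while the converse uses $x = ex$. This immediately shows that $\stab_\cL(x)$ depends only on the idempotent $e$. For part~(2), I would exploit the fact that the idempotents of the right group $\cJ(x)$ form a right zero subsemigroup: if $e'$ denotes the idempotent identity of $\cH(y)$ for some $y \siJ x$, then $ee' = e'$, so $u^\oo e' = u^\oo(ee') = (u^\oo e) e' = e'$, and hence $u^\oo y = y$.

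For part~(1) in the regular case, I would observe that $ue \siL e$ (indeed, $e = u^\oo e = u^{\oo-1}(ue)$ places $e \in S^1(ue)$), and since the regular $\cL$-class $\cL(e)$ equals the $\cH$-class $\cH(e)$ in $\pv{DLG}$, we have $ue \in \cH(e)$. Therefore $L_u$ restricts to the map $h \mapsto (ue)h$ on the group $\cH(e)$, a bijection, and similarly $L_v$ restricts to a bijection of $\cH(e)$. Composing, $L_{uv} = L_u \circ L_v$ is a bijection of $\cH(e)$, and its $\oo$-th power $L_{(uv)^\oo} = (L_{uv})^\oo$ is an idempotent permutation of the finite group $\cH(e)$, which must equal the identity. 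Evaluating at $e$ yields $(uv)^\oo e = e$, whence $(uv)^\oo x = x$, as required.

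When $\cJ(x)$ is non-regular, I would argue that any $u$ with $u^\oo x = x$ forces $u^\oo$ to lie in a regular $\cD$-class $D$ strictly above $\cJ(x)$ in the $\cJ$-order. Using stability of $\pv{DS}$ (namely $px \siJ x \Rightarrow px \siL x$), every $y \siJ x$ admits a decomposition $y = pxq$ with $xq \siR x$ and $px \siL x$; combined with $u^\oo(xq) = xq$, the right group structure of $D$, and the identities $(uv)^\oo = u^\oo(uv)^\oo = v^\oo(uv)^\oo$ from \prref{lem:dlgth}, I would propagate the fixed-point property from $x$ to $y$ and simultaneously deduce closure under multiplication. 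The main obstacle will be precisely this propagation through an $\cL$-class that lacks a group structure; it can be handled by showing that any idempotent $u^\oo$ fixing some element of $\cJ(x)$ acts as a left identity on the entire $\cJ$-class, a fact itself derivable from the right group property of $D$ together with stability. Finally, for the converse, I would suppose (2) holds in $S$ yet $S \notin \pv{DLG}$, so that some regular $\cD$-class $D$ fails to be a right group, providing $y, a \in D$ with $y^\oo a \neq a$. Since $y^\oo \siJ a$ and $y^\oo \cdot y^\oo = y^\oo$ gives $y \in \stab_\cL(y^\oo)$ trivially, condition (2) would force $y \in \stab_\cL(a)$, that is, $y^\oo a = a$, contradicting our choice; hence (2) implies $S \in \pv{DLG}$.
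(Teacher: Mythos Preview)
Your regular-case argument for the forward direction is correct and pleasant, but the proposal has two genuine gaps.

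\textbf{The non-regular case is only a sketch.} You assert that ``any idempotent $u^\omega$ fixing some element of $\cJ(x)$ acts as a left identity on the entire $\cJ$-class'' is ``derivable from the right group property of $D$ together with stability,'' where $D = \cD(u^\omega)$. But you never explain how. The difficulty is precisely the $\cL$-step: if $y = p(xq)$ with $p(xq) \sim_\cL xq$, you need to pull $u^\omega$ past $p$, and the right-group structure of $\cD(u^\omega)$ gives you nothing here, since $p$ need not lie in that $\cD$-class. The identities you cite, $(uv)^\omega = u^\omega(uv)^\omega = v^\omega(uv)^\omega$, do suffice, but only via an argument of the form $y = (pu^\omega r)^\omega y(sq)^\omega$ followed by $(pu^\omega r)^\omega = u^\omega(pu^\omega r)^\omega$ --- which is exactly the paper's uniform proof, not a consequence of your stated plan. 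The same remark applies to part~(1) in the non-regular case.

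\textbf{The converse has a gap.} You take $y,a$ in a regular $\cD$-class $D$ that is not a right group with $y^\omega a \neq a$, and then assert $y^\omega \sim_\cJ a$. But $y^\omega$ need not lie in $D$: in $B_2$, with $D = \{a,b,ab,ba\}$ and $y = a$, one has $y^\omega = 0 \notin D$. So the hypothesis of~(2) is not available. The fix is to choose $y$ so that $y^\omega \in D$ is guaranteed --- for instance, take two idempotents $e,f \in D$ with $ef \neq f$ (such a pair exists precisely when $D$ is not a right group, e.g.\ by \prref{lem:memo}), and apply~(2) to $e \sim_\cJ f$ with $u = e$.

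By contrast, the paper's proof avoids the regular/non-regular split entirely: both~(1) and~(2) follow in two lines from the identity $(xyz)^\omega = y^\omega(xyz)^\omega$ of \prref{lem:dlgth}, and the converse derives $(xy)^\omega = (yx)^\omega(xy)^\omega$ directly from~(2) applied to the $\cJ$-equivalent idempotents $(xy)^\omega$ and $(yx)^\omega$. Your group-theoretic treatment of the regular case is more conceptual, but the uniform identity-based argument is both shorter and complete.
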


\begin{proof}
  \textit{Ad $(1)$.} Suppose that $u^{\oo}x = x$ and $v^{\oo} x = x$.
  Then $x = u^{\oo}x = u^{\oo}v^{\oo} x = (u^{\oo}v^{\oo})^{\oo}x$ and, since $uv \geJ u^\oo v^\oo$,  item $(2)$ of \prref{lem:dlgth} yields $(u^{\oo}v^{\oo})^{\oo}x = (uv)^{\oo}(u^{\oo}v^{\oo})^{\oo}x = (uv)^{\oo} x$.

  \textit{Ad $(2)$.} Suppose that $x \siJ y$, that is, $y = rxs$ and $x = r'ys'$ for some $r,r',s,s' \in S^1$.
  If, moreover, $u^\oo x = x$, then $y = ru^{\oo} r' y s' s = (ru^{\oo}r')^{\oo} y (s's)^{\oo} = u^{\oo} (ru^{\oo}r')^{\oo} y (s's)^{\oo} = u^{\oo} y$ where the second to last equality is due to item $(2)$ of \prref{lem:dlgth}.
  Conversely, if $S$ satisfies $(2)$, then we have $yx \in \stab_{\cL}((yx)^{\oo}) = \stab_{\cL}((xy)^{\oo})$ for all $x,y \in S$ since $(yx)^{\oo} \siJ (xy)^{\oo}$.
  Hence, the semigroup $S$ satisfies the identity $(xy)^{\oo} = (yx)^{\oo} (xy)^{\oo}$ so that $S \in \pv{DLG}$ by \prref{lem:dlgth}.
\end{proof}

\begin{lemma}\label{lem:lst1}
  Let $S \in \pv{DLG}$.
  Then $u,x \in S$ satisfy $u^\oo x = x$ if and only if $ux \siL x$.
  Conversely, if this equivalence holds for all elements $u, x$ of a finite semigroup $S$, then $S \in \pv{DLG}$.
\end{lemma}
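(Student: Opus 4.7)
The plan is to establish both implications using the characterization of $\pv{DLG}$ via the identity $(xy)^{\oo} = y^{\oo}(xy)^{\oo}$ from \prref{lem:dlgth}(1), which ties directly with the definition of $\cL$-stabilizers.

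For the forward direction, suppose $S \in \pv{DLG}$ and fix $u, x \in S$. If $u^\oo x = x$, then I can write $x = u \cdot (u^{\oo-1}x)$, so $x \leL ux$; combined with the trivial $ux \leL x$, this yields $ux \siL x$. (Note that this half does not use the $\pv{DLG}$-hypothesis.) For the converse, if $ux \siL x$, there exists $s \in S^1$ with $x = s \cdot ux = (su)x$; iterating gives $x = (su)^{n}x$ for all $n \geq 1$ and hence $x = (su)^{\oo}x$. Applying the identity $(su)^{\oo} = u^{\oo}(su)^{\oo}$ from \prref{lem:dlgth}(1) to the elements $s,u \in S^1$, I then obtain $x = u^{\oo}(su)^{\oo}x = u^{\oo}x$.

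For the converse direction of the lemma, assume the stated equivalence holds for all elements of a finite semigroup $S$. To show $S \in \pv{DLG}$, I will verify \prref{lem:dlgth}(1), namely $(xy)^{\oo} = y^{\oo}(xy)^{\oo}$ for all $x, y \in S$. Fix such $x, y$, and apply the assumed equivalence with $u := y$ and the second variable instantiated by the idempotent $(xy)^{\oo}$. It then suffices to prove $y(xy)^{\oo} \siL (xy)^{\oo}$. The inequality $y(xy)^{\oo} \leL (xy)^{\oo}$ is immediate; for the reverse, the computation
\[
  (xy)^{\oo} \,=\, (xy)^{\oo+1} \,=\, xy\cdot (xy)^{\oo} \,=\, x\cdot y(xy)^{\oo}
\]
places $(xy)^{\oo}$ in $S^{1}\cdot y(xy)^{\oo}$, yielding $(xy)^{\oo} \leL y(xy)^{\oo}$. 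The equivalence then forces $y^{\oo}(xy)^{\oo} = (xy)^{\oo}$.

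The main difficulty, if any, is purely organizational: choosing the correct identity from \prref{lem:dlgth} and the right substitution so that the $\cL$-equivalence becomes a one-line computation. Once identity (1) is picked, both implications collapse to elementary manipulations, with no genuine combinatorial obstacle to overcome.
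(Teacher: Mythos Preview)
Your proof is correct and essentially identical to the paper's: both directions use the identity $(xy)^{\oo}=y^{\oo}(xy)^{\oo}$ from \prref{lem:dlgth}(1), with the same witness $s\in S^{1}$ for $x=sux$ iterated to $(su)^{\oo}x=x$, and the same one-line verification $y(xy)^{\oo}\siL(xy)^{\oo}$ for the converse. Your parenthetical remark that the implication $u^{\oo}x=x\Rightarrow ux\siL x$ does not use the $\pv{DLG}$-hypothesis is accurate and worth keeping.
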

\begin{proof}
  If $u^{\oo} x = x$, then $x = u^{\oo} x \leL ux \leL x$ so that $ux \siL x$.
  Moreover, $ux \siL x$ holds if and only if $x = vux$ and, hence, $x = (vu)^{\oo}x$ for some $v \in S^1$.
  Hence, we also have $(vu)^{\oo} = u^{\oo}(vu)^{\oo}$ by the first identity from \prref{lem:dlgth} and, therefore, $x = (vu)^{\oo}x = u^{\oo}(vu)^{\oo}x = u^{\oo}x$.

  Now suppose that $ux \siL x \Rightarrow u^\oo x = x$ holds in a finite semigroup $S$.
  Then, for all $x, y \in S$, we have $y^\oo (xy)^\oo = (xy)^\oo$ since $y (xy)^\oo \siL (xy)^\oo$.
  Hence, $S \in \pv{DLG}$ by \prref{lem:dlgth}.
\end{proof}

\subsection{Examples}\label{sec:ATexas}

The following non-exhaustive list shows that the variety \pv{DLG} is rather rich in the sense that it contains various interesting families of semigroups and can recognize interesting languages. 
\begin{itemize}
  \item For aperiodic semigroups, the variety \pv{DLG} contains the variety \pv{Sl} of finite semilattices, which allows for the implementation of \emph{alphabetical constraints}, as well as the larger variety \pv{J} of finite $\cJ$-trivial semigroups.
    The latter allows for \emph{piecewise testable constraints} in the sense of Imre Simon \cite{sim75} (see also \cite[Thm.~4]{pin86} or \cite[Thm.~4]{DiekertGastinKufleitner08}).
  \item More generally, \pv{DLG} contains the variety $\pv{RRB} \supseteq \pv{Sl}$ of finite right regular bands, as well as the larger variety $\pv{L} \supseteq \pv{J}$ of finite $\cL$-trivial semigroups, which coincides with the class of all finite semigroups whose regular $\cD$-classes are right zero semigroups. 
    On the other hand, the (nontrivial) left zero semigroups are \emph{not} contained in \pv{DLG}.
  \item As a subvariety of \pv{J}, the variety \pv{N} of all finite nilpotent semigroups is contained in \pv{DLG}, and so is the variety $\pv{D} \sse \pv{L}$ consisting of all finite semigroups in which all idempotents are right zeros.
    Using constraints in \pv{N} it is possible to enforce (or prohibit) the length of a solution to exceed a specified threshold, or to require that the solution is a specific word;
    using constraints in \pv{D} one can require a specific word to appear as a suffix.\footnote{For equations, these types of constraints can also be realized via substitutions and case distinctions.}
  \item The variety \pv{G} of finite groups is also clearly contained in \pv{DLG}, which allows for \emph{group constraints}, such as modular counting.
    Note also that this type of constraints may be quite liberally combined with those discussed in the previous items; see \cite[Prop.~2.2.4]{AlmeidaWeil1997IJAC}.
\end{itemize}

Another interesting family of finite semigroups contained in \pv{DLG} is the family of \emph{duo} semigroups comprising semigroups $S$ in which $x S^1 = S^1 x$ holds for all $x \in S$.
This class of semigroups, which is not a variety but contains all groups and commutative semigroups, was defined by Pond\v{e}l\'{i}\v{c}ek~\cite{PondelicekCzMathJ1975} as a semigroup-theoretic generalization of Feller’s notion of duo rings~\cite{FellerTransAMS1958}.
Note that in a duo semigroup $S$ every product of the form $(x_1y)(x_2y) \cdots (x_n y)$ with $x_1, \dotsc, x_n, y \in S$ can be written as $y^n z$ for some $z \in S^1$.
Hence, if $S$ is finite, then we obtain $(xy)^{\oo} = y^{\oo} z = y^{\oo} y^{\oo} z = y^{\oo} (xy)^{\oo}$ for some $z \in S^1$. 
Thus, $S \in \pv{DLG}$ by \prref{lem:dlgth}.

Apart from the nontrivial left zero semigroups mentioned above, we have seen in \prref{sec:brandt} that the Brandt semigroup $B_2$ does \emph{not} belong to $\pv{DLG}$ and, in fact, not even to $\pv{DS}$.

\section{Conclusion and Open Problems}\label{sec:cop}

The guiding question of this work is whether a word equation with infinitely many solutions must have an infinite exponent of periodicity.
We do not yet know the answer. 
Nevertheless, regardless of the ultimate resolution of this question, the results obtained here~--~focused on equations with regular constraints~--~are of independent interest.

Our approach has two complementary aspects, like Yin and Yang or two sides of the same medal: on one side, a class $\cE$ of word equations; on the other, a class $\pv{V}$ of semigroups.
When $\pv{V}$ is the variety generated by the trivial semigroup (that is, for equations without constraints), it was already known that a word equation with infinitely many solutions has an infinite exponent of periodicity in two cases: two-variable and quadratic equations.

The quadratic case extends naturally to constraints in finite groups. More generally, if $U = V$ is quadratic and the constraint $\mu$ maps variables to units of a finite monoid $M$, the machinery from Section~\ref{sec:origSN} applies because finite monoids are Dedekind finite, meaning that their non-units form an ideal.
In this setting, the playground of a word $W \in \Omega^+$ is the maximal prefix $\widetilde{W}$ that is mapped to a unit $\mu(\widetilde{W})$. 
Similar reasoning works for alphabetic constraints, where playgrounds are defined in terms of sets of constants. 
This approach extends to duo semigroups and to the variety $\mathbf{DLG}$ (and its left-right dual $\mathbf{DRG}$), but seems to reach its limit there.
Indeed, $\pv{DLG}$ is up to duality the largest variety of finite semigroups known (to us) whose members are nice for all quadratic word equations.
Whether an even larger variety shares this property remains open.
Beyond this, a few special cases are known: for example, we have seen that the Brandt semigroup $B_2$ is nice for quadratic equations with at most two constants. 

Still, even without constraints, the central question remains far from settled, and there is ample room for partial progress. 
Below we outline several open problems. 
In what follows, it is natural to first restrict attention to $\cE$ consisting of two-variable or quadratic equations.

\begin{problem}\label{prob:1}
  Does the family of nice semigroups for $\cE$ form a variety of finite semigroups? 
\end{problem}

\begin{problem}\label{prob:2}
  Given a finite semigroup $S$, is it decidable whether $S$ is nice for $\cE$?
\end{problem}

\begin{problem}\label{prob:3}
  Is every finite semigroup in $\pv{DO}$ or $\pv{DS}$ nice for $\cE$?
\end{problem}

The equation $XabY = YbaX$ admits solutions where $\sigma(X)$ and $\sigma(Y)$ are arbitrarily long factors of Fibonacci words \cite[Sec.~6]{WeinbaumPacific2004ABCD}, hence with bounded exponent of periodicity, making it a candidate for a potential counterexample. 
Regular constraints alone might not suffice, so one could also consider length constraints~--~for instance, requiring $p \cdot \abs{\sig(Y)} \leq \abs{\sig(X)} \leq q \cdot \abs{\sig(Y)}$ for rationals $p < q$ close to the golden ratio $\phi=1.618\dots$.
While the decidability of word equations with length constraints is a long-standing open problem, Day and Konefa~\cite{RP2025DayKonefa} recently proved it decidable for two-variable quadratic equations, including $XabY = YbaX$.

\begin{problem}\label{prob:4}
  Consider an equation $XAY = YBX$ over $\OO = \os{a,b} \cup \os{X,Y}$ with $A,B \in \OO^+$, for example $XabY = YbaX$.
  Do there exist a $\kappa\in \N$ and a combination of regular constraints and length constraints such that, under these constraints, the equation has infinitely many solutions, and every solution has exponent of periodicity less than $\kappa$?
\end{problem}

The problems above about word equation in free semigroups also arise naturally in the setting of free groups.
Here, recognizable constraints correspond to homomorphisms to finite groups.
A much larger and much more interesting class of constraints for a free group $F$ is the family of  rational subsets $\Rat(F)$.
It is known that $\Rat(F)$ forms an effective Boolean algebra \cite{ben69}, and that the full solution set of an equation with a rational constraint can be effectively described by an EDT0L language, using results from \cite{dgh05IC}, \cite{DiekertJP16}, and \cite{CiobanuDiekertElder2016ijac}.
However, even for quadratic equations, the following question remains open.

\begin{problem}\label{prob:5}
  Are all rational constraints nice for quadratic word equations in free groups?
\end{problem}

\bibliographystyle{abbrv}
\bibliography{qwe}

\end{document}